\newcommand{\blind}{1}
\DeclareMathOperator*{\argmin}{arg\,min}
\newtheoremstyle{mystyle}
  {}
  {}
  {}
  {}
  {\itshape}
  {.}
  { }
  {}%
\theoremstyle{mystyle}
\newtheorem{proposition}{Proposition}[subsection]
\newtheorem{definition}{Definition}[subsection]
\newtheorem*{assumption*}{\assumptionnumber}
\providecommand{\assumptionnumber}{}
\newcommand{\norm}[1]{\lVert#1\rVert}
\let\originalleft\left
\let\originalright\right
\renewcommand{\left}{\mathopen{}\mathclose\bgroup\originalleft}
\renewcommand{\right}{\aftergroup\egroup\originalright}
\titleformat{\subsubsection}[runin]
{\normalfont\itshape}{\thesubsubsection}{0.5em}{}
\titlespacing{\subsubsection}{\parindent}{3.25ex plus 1ex minus .2ex}{1.5ex plus .2ex}
\let\OLDthebibliography\thebibliography
\renewcommand\thebibliography[1]{
  \OLDthebibliography{#1}
  \setlength{\parskip}{0pt}
  \setlength{\itemsep}{0pt plus 0.3ex}
}
\begin{document}

    \date{}
    \def\spacingset#1{\renewcommand{\baselinestretch}%
    {#1}\small\normalsize} \spacingset{1}
    
    \if1\blind
    {
      \title{\bf Simultaneous Feature Selection and Outlier
        Detection with Optimality Guarantees}
      \author{Luca Insolia\thanks{
            This work was partially funded by the NIH B2D2K training grant and the Huck Institutes of the Life Sciences of Penn State. Computation was performed on the Institute for Computational and Data Sciences Advanced CyberInfrastructure (ICDS-ACI; Penn State University).}
        \hspace{.2cm}\\
        \normalsize Faculty of Sciences, Scuola Normale Superiore 
        \vspace{.2cm} \\
        Ana Kenney\\
        \normalsize Dept.~of Statistics, Penn State University
        \vspace{.2cm} \\
        Francesca Chiaromonte\\
        \normalsize Dept.~of Statistics, Penn State University \\
        \normalsize Inst.~of Economics \& EMbeDS, Sant'Anna School of Advanced Studies 
        \vspace{.2cm} \\
        Giovanni Felici\\
        \normalsize Istituto di Analisi dei Sistemi ed Informatica, Consiglio Nazionale delle Ricerche}
      \maketitle
    } \fi
    
    \if0\blind
    {
      \bigskip
      \bigskip
      \bigskip
      \begin{center}
        {\LARGE\bf Simultaneous Feature Selection and Outlier\\ \vskip0.2cm
        Detection with Optimality Guarantees}
    \end{center}
      \medskip
    } \fi
    
    \begin{abstract}
        \noindent
        Sparse estimation methods capable of tolerating outliers have been broadly investigated in the last decade.
        We contribute to this research considering high-dimensional regression problems contaminated by multiple \textit{mean-shift outliers} which affect both the response and the design matrix.
        We develop a general framework for this class of problems and propose the use of \textit{mixed-integer programming} to simultaneously perform feature selection and outlier detection with provably optimal guarantees. 
        We characterize the theoretical properties of our approach, i.e.~a necessary and sufficient condition for the \textit{robustly strong oracle property}, which allows the number of features to exponentially increase with the sample size; the optimal  estimation of the parameters; and
        the breakdown point of the resulting estimates.
        Moreover, we provide computationally efficient procedures to tune integer constraints and to warm-start the algorithm.
        We show the superior performance of our proposal compared to existing heuristic methods through numerical simulations and an application investigating the relationships between the human microbiome and childhood obesity. 
    \end{abstract}
    
    \noindent
    \small{KEY WORDS:} Sparse estimation;  Mixed-integer programming; Strong oracle property; Robust regression; Breakdown point.
    
    \spacingset{1} 
    
    \section{Introduction}
    \label{sec:intro}

    	High-dimensional regression problems have become ubiquitous in most application domains. In these problems the number of features recorded on each observation (or case) is very large -- possibly larger than the sample size, and often growing with the sample size itself.
    	The availability of ever larger numbers of potential predictors increases both the chances that some substantial portion of them are irrelevant, and the chances of contamination in the data (i.e.~some cases following a different model).
    	Thus, it is critical to employ estimation approaches that address both {\em sparsity} and {\em statistical robustness}. 
        The \textit{Mean-Shift Outlier Model} (MSOM) is a common paradigm leading to the exclusion of outliers from the fit \citep{beckman1983outlier}.
    	In the high-dimensional setting, traditional approaches focused on robustifying information criteria or resampling methods \citep{muller2005outlier}.
    	In the last decade several \textit{robust penalization methods} have been introduced in the literature.
    	They generally rely on a robustification of soft-selection procedures \citep{alfons2013sparse}, adopting a case-wise robust counterpart of \textit{Maximum Likelihood Estimation} (MLE). 
        The MSOM can be equivalently parametrized with the inclusion of binary variables \citep{cook1982residuals}, transforming outlier detection into a feature selection problem \citep{morgenthaler2004algorithms} and motivating the development of methods for \textit{simultaneous feature selection and outlier detection} (SFSOD).
    
        In contrast to existing methodologies which extensively rely on heuristics, we propose a discrete and provably optimal approach to perform SFSOD, highlighting its connections with other approaches.
        The $L_0$ constraint has been used separately in the context of feature selection (\citealt{bertsimas2016best}; \citealt{bertsimas2020sparse}) and robust estimation (\citealt{zioutas2009quadratic}; \citealt{bertsimas2014least}) -- both of which can be formulated as a \textit{Mixed-Integer Program} (MIP) and solved with optimality guarantees. 
        We combine these two approaches into a novel formulation and take advantage of existing heuristics to provide effective big-$\mathcal{M}$ bounds and warm-starts to reduce the computational burden of MIP. 
    	We provide theoretical guarantees for our approach, including its high breakdown point, a necessary and sufficient condition to achieve the \textit{robustly strong oracle property} -- which holds also in the ultra-high dimensional case with the number of features exponentially increasing with the sample size -- and optimal parameter estimation. In contrast to existing methods, our approach requires weaker assumptions and allows both the feature sparsity level and the amount of contamination to depend on the number of predictors and the sample size, respectively.
    	Moreover, we propose criteria to tune, in a computationally efficient and data-driven way, both the sparsity of the solution and the estimated amount of contamination.
    
        The article is organized as follows: 
        Section~\ref{sec1:background} discusses the relevant background for outlier detection and feature selection.
        Section~\ref{sec2:proposal} details our proposal -- including a general framework for SFSOD, the MIP formulation and its theoretical properties. 
        Section~\ref{sec3:sim} presents a simulation study comparing our proposal with state-of-the-art methods and Section~\ref{sec4:appl} presents our real data application investigating the relationships between child obesity and the human microbiome.
        Final remarks are included in Section~\ref{sec5:final} and additional details are provided in the Supplementary Material.

        \section{Background} \label{sec1:background}
        
            Consider a regression model of the form $  \bm{y} = \bm{X} \bm{\beta} + \bm{\varepsilon} $, where $ \bm{y} \in \mathbb{R}^{n} $ is the response vector, $ \bm{\varepsilon} \in \mathbb{R}^{n}$ the error vector with a $N( \bm{0}, \sigma^2 \bm{I}_n )$ distribution ($\bm{I}_n$ is the identity matrix of size $n$), $ \bm{X} \in \mathbb{R}^{n \times p} $ the design matrix, and $ \bm{\beta} \in \mathbb{R}^{p}$ the vector of regression coefficients.
    		In the following, we describe the relevant background for our proposal.
    		In particular, we briefly review methods for outlier detection, and present the equivalent formulation as a feature selection problem.
    		We then discuss approaches for model selection, focusing on the use of an $L_0$ constraint for best subset selection.

    		We consider a case-wise contamination mechanism, where each outlying unit might be contaminated in some (or even all) of its dimensions. In particular, we assume that outliers are generated by an MSOM, where the set of outliers $ M = \{ i \in \{ 1, \ldots, n \}: \varepsilon_i \sim N(\lambda_{\varepsilon_i}, \sigma^2), \lambda_{\varepsilon_i} \neq 0  \}$ has cardinality $\lvert M \rvert = n_0$.
   	        For a given dimension $p < n-n_0$, it is well-known that MLE leads to the removal of outliers from the fit \citep{cook1982residuals}. 
       	    Moreover, as is customary, we assume that the MSOM can also affect the design matrix $\bm{X}$ with mean shifts $\lambda_{\bm{x}_i}$ \citep{maronna2006robust}.
	        
	        If a regression comprises a single outlier, its position corresponds to the unit with largest absolute Studentized residual, which is a monotone transformation of the deletion residual 
	        $
        	    t_i = (y_i- \bm{x}_i^{T} \hat{\bm{\beta}}_{(i)}) / (\hat{\sigma}_{(i)} 
        	    (1+ \bm{x}_{i}^{T}( \bm{X}_{(i)}^{T} \bm{X}_{(i)})^{-1} \bm{x}_i)^{1 / 2} ,
    	    $
	        where the subscript $(i)$ indicates the removal of the $i$-th unit \citep{atkinson1985plots}.
	        Under the null model, $t_i$ is distributed as a Student's $t$ with $n-p-1$ degrees of freedom, which can be computed from an MLE fit based on all units and used as a test for outlying-ness of single data points \citep{cook1982residuals}.
    	    The same idea can be easily generalized to regressions with multiple outliers.
    	    Operationally though, this was considered ineffective -- due to the high likelihood of masking (undetected outlying cases) and swamping (non-outlying cases flagged as outliers) effects \citep{huber2009robust} -- and computationally intractable \citep{bernholt2006robust}. 
    	    This motivated the development of high-breakdown point estimators such as the Least Trimmed Squares (LTS), S, and MM (\citealt{maronna2006robust}, see Section~\ref{subsec:theory} for more details); 
            outlier detection and high-breakdown point estimation are historically distinct but closely related areas of statistical research \citep{huber2009robust}.
    		
            Assuming without loss of generality that outliers occupy the first $n_0$ positions in  the data, the MSOM can be equivalently parametrized as
    		$
    		    \bm{y} = \bm{X} \bm{\beta} + \bm{D}_{n_0} \bm{\phi} + \bm{\varepsilon} , 
		    $
    	    where the original design matrix $\bm{X}$ is augmented with a binary matrix $\bm{D}_{n_0} = [ \bm{I}_{n_0}, \bm{0}]^T$ of size $n \times n_0 $ indexing the $n_0$ outliers \citep{morgenthaler2004algorithms}.
    	    If $p < n-n_0$, the MLE for $\bm{\phi} \in \mathbb{R}^{n_0} $ provides prediction residuals for the $n_0$ units excluded from the fit; i.e.~their residuals under a model which excludes them from the estimation process. This is given by
    	    $
    			\hat{\bm{\phi}}  = \left[ {\bm{I}_{n_0} - \bm{H}_{MM}} \right]^{-1} (\bm{y}_{M}- \bm{X}_{M}^{T} \hat{\bm{\beta}}_{(M)}) ,
			$
    	    where $ \bm{H}_{MM} = \bm{X}_M (\bm{X}^T\bm{X})^{-1} \bm{X}_M^T$, and the associated $t$-statistics $ \bm{t}_{M} $ provide (multiple) deletion residuals.
        	However, masking and swamping effects can again arise if $\bm{D}_{n_0}$ does not index all possible outliers. 
        	
        	Outlier detection in low-dimensional problems can be performed substituting the identity matrix $\bm{I}_n$ in place of $\bm{D}_{n_0}$ and applying feature selection methods to $\bm{\phi} \in \mathbb{R}^{n} $ to identify outlying cases. 
        	The literature contains examples of both convex 
        	\citep{mccann2006robust,taylan2014approach,liu2019minimizing}
            and non-convex 
            \citep{she2011outlier,liu2017robust,gomez2019outlier,barratt2020minimizing} penalization methods applied to this problem; notably, the latter are necessary to achieve high-breakdown point estimates.
        	
        	
        	Penalization methods are also the hallmark of feature selection  in high dimensional problems, where they seek to induce sparsity estimating $p_0 < p$ non-zero coefficients in $ \bm{\beta}$ -- whose dimension $p$ can exceed $n$.
        	Soft penalization methods such as Lasso \citep{tibshirani1996regression}, SCAD \citep{fan2001variable}, adaptive Lasso \citep{zou2006adaptive} and MCP \citep{zhang2010nearly} rely on non-differentiable  continuous penalties, which can be convex or non-convex. 
        	They can be formulated as
        	$
    		\hat{\bm{\beta}}  = \operatorname*{\arg\min}_{\bm{\beta}} ~
        		\norm{ \bm{y} - \bm{X} \bm{\beta}}_2^2 + R_\omega(\bm{\beta}) 
        	$, 
        	where the penalty function $R_\omega(\bm{\beta}) $ depends on a tuning parameter $\omega$.
        	
        	Best subset selection, a traditional hard penalization method, solves feature selection combinatorially, comparing all possible models of size $p_0$ \citep{miller2002subset}.
        	This can be formulated as a MIP through an $L_0$ constraint on $\bm{\beta}$, where the $L_0$ pseudo-norm is defined as 
        	$ \norm{\bm{\beta}}_0 = \sum_j I(\beta_j \neq 0) $ ($I(\cdot)$ is the indicator function).
        	The MIP formulation of best subset selection is computationally intractable \citep{natarajan1995sparse} and was previously considered impossible to solve with optimality guarantees for regression problems with $p$ larger than a few tens \citep{hastie2015statistical}.
            Nevertheless, improvements in optimization solvers and hardware components, which experienced a 450 billion factor speed-up between 1991 and 2015 alone, now allow one to efficiently solve problems of realistic size with provable optimality \citep{bertsimas2016best}.
            Modern MIP solvers are based on implicit enumeration methods along with constraints such as \textit{cutting planes} that tighten the relaxed problem
            (\textit{Branch \& Bound} and  \textit{Branch \& Cut}, \citealt{Schrijver1986}).
            Optimality is certified monitoring the gap between the best feasible solution and the relaxation of the problem.
    		Importantly, MIP methods can recover the subset of true active features (i.e.~they satisfy oracle properties, see Section~\ref{subsec:theory}) under weaker assumptions compared to soft penalization methods.

		\section{Proposed methodology}
		\label{sec2:proposal}
    		
    		We focus on a regression comprising both outliers and inactive features, where one has to tackle at the same time an {\em unlabeled} MSOM problem (i.e.~one where the identity, number and strength of outliers are unknown, \citealt{beckman1983outlier}) and the sparse estimation of $\bm{\beta}$.
    		SFSOD can be framed as an optimization problem; namely:
    		\begin{align}\label{eqSODFS}
        		\left[ \hat{\bm{\beta}}, \hat{\bm{\phi}} \right]  = \operatorname*{\arg\min}_{\bm{\beta}, \bm{\phi}} ~
        		& \sum_{i=1}^n \mathcal{\rho} (y_i,  f(\bm{x}_i; \bm{\beta}) + \phi_i ) \\
        		\text{s.t.} \quad & R_\omega(\bm{\beta}) \leq c_\beta \nonumber\\
        		& ~ R_\gamma( \bm{\phi} ) \leq c_\phi \nonumber,
    		\end{align}
    		where $\rho(\cdot)$ is a loss function, 
    		$f(\cdot)$ defines the relation between predictors and response vector,
    		and $R_\omega(\bm{\beta})$ and $R_\gamma(\bm{\phi})$ are penalties subject to sparsity-inducing constraints, which may depend on tuning constants $\omega$ and $\gamma$. 
    		Non-zero coefficients in $\hat{\bm{\beta}}$ and $\hat{\bm{\phi}}$ identify active features and outlying units, respectively. 
    	    Although in this article we focus on linear regression the framework in \eqref{eqSODFS} is very general; it comprises Generalized Linear Models, several classification techniques and non-parametric methods. 
    		
    		Many approaches have been recently developed to solve \eqref{eqSODFS} using \textit{Ordinary Least Squares} (OLS) as the loss function $\rho(\cdot)$.
            Both penalties $ R_\omega(\beta) $ and $ R_\gamma( \phi ) $ are generally convex
            \citep{morgenthaler2004algorithms,menjoge2010diagnostic,lee2012regularization,kong2018fully} 
            although some non-convex procedures have been considered \citep{she2011outlier}. 
            Similar ideas have been explored from a Bayesian perspective \citep{hoeting1996method} 
            and for the estimation of units' weights \citep{li2012simultaneous, xiong2013regression}.
            Robust soft penalization methods also can be cast into \eqref{eqSODFS}, abandoning the explicit use of $\bm{\phi}$ and adopting a robust loss $\rho(\cdot) $ in place of the OLS.
            These include MM-estimators for ridge regression \citep{maronna2011robust}, sparse-LTS \citep{alfons2013sparse}, bridge MM-estimators \citep{smucler2017robust}, enet-LTS \citep{kurnaz2017robust}, penalized elastic net S-estimators \citep{freue2019robust}, and penalized M-estimators \citep{loh2017statistical,chang2018robust,amato2020penalised}, as well as their re-weighted counterparts.
    		Indeed, through specific penalties, M-estimators can be equivalently formulated as feature selection problems \citep{she2011outlier}.
    		Related approaches include the robust LARS \citep{khan2007robust}, robust adaptive Lasso \citep{machkour2017outlier}, and penalized exponential squared loss \citep{wang2013robust}.
    		
            While \eqref{eqSODFS} highlights an important parallel between SFSOD and robust soft penalization, existing heuristic methods suffer several drawbacks.
            Some rely on restrictive assumptions or their finite-sample and asymptotic performance in terms of feature selection and outlier detection is not well-established.
            Others rely heavily on an initial subset of cases identified as non-outlying. 
            Yet others provide a down-weighting of all units, which complicates interpretation and the objective identification of outliers, or have an asymptotic breakdown point of $0\%$, so they in fact do not tolerate outliers in the first place.
            Finally, some methods require tuning of several parameters in addition to $\omega$ and $\gamma$, which can severely increase  computational burden.
            Our proposal solves \eqref{eqSODFS} with optimality guarantees, from both optimization and theoretical perspectives. 
            This preserves the intrinsic discreteness of the problem, facilitating implementation, interpretation, and generalizations.

		\subsection{MIP formulation}
		    
    	    We impose two separate integer constraints on $\bm{\beta}$ and $\bm{\phi}$ in \eqref{eqSODFS}, combining in a single framework the use of $L_0$ constraints for feature selection \citep{bertsimas2016best,kenney2018efficient,bertsimas2020sparse} and outlier detection \citep{bertsimas2014least,zioutas2009quadratic}.
    	    In particular, we consider the following MIP formulation: 
    	    \begin{subequations}\label{eq:miqp}
        	    \begin{align}
                	\left[ \hat{\bm{\beta}}, \hat{\bm{\phi}} \right]  = 
                	\argmin_{\bm{\beta},\bm{z}^{\beta},\bm{\phi},\bm{z}^{\phi}}~
                    &\frac{1}{n} \rho(\bm{y} - \bm{X} \bm{\beta} - \bm{\phi}) \tag{\ref{eq:miqp}} \\
                    \mathrm{s.t.} \quad &-\mathcal{M}_j^{\beta} z^{\beta}_j \leq  \beta_j \leq \mathcal{M}_j^{\beta} z^{\beta}_j   \label{eq:miqp_bigMbeta}\\
                    &-\mathcal{M}_i^{\phi} z^{\phi}_i \leq  \phi_i \leq \mathcal{M}_i^{\phi}z^{\phi}_i  \label{eq:miqp_bigMphi} \\
                    &\sum_{j=1}^p \beta_j^2 \leq \lambda  \label{eq:mip_ridge} \\
                    &\sum_{j=1}^p z^{\beta}_j\leq k_p \label{eq:mip_intB} \\
                    &\sum_{i=1}^n z^{\phi}_i\leq k_n \label{eq:mip_intP} \\
                    &\beta_j \in \mathbb{R}, \ \ z^{\beta}_j \in \{0,1\}  , \qquad j=1,\ldots, p \label{eq:mip_B} \\
                    &\phi_i \in \mathbb{R}  , \ \ z^{\phi}_i \in \{0,1\}  , \qquad i=1,\ldots, n , \label{eq:mip_P}
                \end{align}
            \end{subequations}
            where $\bm{\mathcal{M}}^{\beta}$ and $\bm{\mathcal{M}}^{\phi}$ in constraints (2a) and (2b) are the so-called big-$\mathcal{M}$ bounds \citep{Schrijver1986}.
            In our proposal these are vectors of lengths $p$ and $n$, respectively, which can be tailored for each $\beta_j$ and $\phi_i$.
            In the $L_0$-norm constraints~\eqref{eq:mip_intB} and~\eqref{eq:mip_intP}, $k_p$ and $k_n$ are positive integers modulating sparsity for feature selection and outlier detection, respectively -- for the latter, we can think of sparsity as a level of trimming (i.e.~outlier removal).
            In the $L_2$-norm ridge-like constraint~\eqref{eq:mip_ridge}, $\lambda$ is a positive scalar that can be used to counteract strong collinearities among the features \citep{hoerl1970bridge}.
            Here it also modulates a trade-off between continuity and unbiasedness in the estimation of $\bm{\beta}$, and allows one to calibrate the intrinsic discreteness of the problem 
            -- making its solutions more stable with respect to small data perturbations \citep{breiman1995better} and weak signal-to-noise ratio regimes \citep{hastie2017extended}.
            
            Although solving \eqref{eq:miqp} plainly with state-of-the-art software may be computationally intractable for large dimensions,
            with the appropriate implementation it can be used to tackle many real-world applications optimally and efficiently.
            Another important advantage of our framework from an application standpoint is that it allows one to easily incorporate additional constraints to leverage structure in the data -- such as groups \citep{zou2005regularization}, ranked features \citep{tibshirani2005fused}, hierarchical interactions \citep{bien2013lasso} and compositional information \citep{lin2014variable}.
        	Moreover, the $L_0$-penalty (also called entropy penalty) is 
        	not equivalent to the $L_0$ constrained formulation used in MIP due to non-convexity
        	\citep{shen2013constrained}.
            
            \subsection {Some implementation details}
            \label{sec2:proposalTech}
            
            As customary in feature selection problems, we do  not penalize the intercept and we standardize features at the outset. Because the regressions we focus on comprise outliers, we use a robust standardization; 
            both $\bm{y}$ and $\bm{X}$ are centered to have zero medians, and each $\bm{X}_j$ is also scaled to have unit median absolute deviation (MAD); results in the output are given in the original scale.
    	    Although binary features are often standardized \citep{tibshirani1997lasso}, since the constraints on $\bm{\beta}$ and $\bm{\phi}$ are separate, we do not standardize the binary variables used in the outlier detection component of the problem.
    	    Note that the interpretation of the entries in $\bm{\phi}$ as prediction residuals indicates that they are already on the same scale under the null model.

            Setting the big-$\mathcal{M}$ bounds for \eqref{eq:miqp} is made even more complicated due to the ``double'' nature of SFSOD.
            A robust estimator of the regression coefficients, say $\tilde{\bm{\beta}}$, can be used to set $\bm{\mathcal{M}}^{\beta} = \tilde{\bm{\beta}} c$ and 
            $ \bm{\mathcal{M}}^{\phi} = (\bm{y} - \bm{X} \tilde{\bm{\beta}}) c = \tilde{\bm{e}} c $,
            where $c \geq 1$ is a suitable multiplicative constant.
            We generalize this approach using an \textit{ensemble} $\tilde{\bm{\beta}}_t $ (for $t=1,\ldots,T$) of preliminary estimators and setting 
            $    \mathcal{M}^\beta_j = \max_t ( \lvert \tilde{\beta}_{t_j} \rvert ) c
            $
            and 
            $  \mathcal{M}^\phi_i = \max_t ( \lvert \tilde{e}_{t_j}  \rvert ) c 
            $.
            The ensemble guarantees that, if at least one of the 
            $\tilde{\bm{\beta}}_{t}$'s is reasonably close to the optimal solution, the MIP will easily recover such solution.
            Importantly, having also non-robust or non-sparse estimators in the ensemble does not negatively affect solution quality but only convergence speed.
                
            The MIP formulation in \eqref{eq:miqp} critically depends on the big-$\mathcal{M}$ bounds; 
            they should be large enough to retain the optimal solution, yet small enough to avoid unnecessary computations and numerical instability.
            If identifying suitable bounds is not possible, we use an alternative strategy based on \textit{Specially Ordered Sets of Type 1} (SOS-1)  \citep{bertsimas2016best}. 
            These allow only one variable in the set to be non-zero,
            e.g.,~$(1-z^{\beta}_j, \beta_j) = 0 \iff (1-z^{\beta}_j, \beta_j) : \text{SOS-1} $.
            Hence, constraints \eqref{eq:miqp_bigMbeta} and \eqref{eq:miqp_bigMphi} can be equivalently formulated as:	  
    	    \begin{subequations}
        	    \begin{align}
                    & (1-z^{\beta}_j, \beta_j) : \text{SOS-1}, \qquad j=1,\ldots, p  \label{eq:sosp}\\
                    & (1-z^{\phi}_i, \phi_i) : \text{SOS-1}, \qquad i=1,\ldots, n, \label{eq:sosn}
                \end{align}
            \end{subequations}
            which can be solved via modern MIP solvers such as \texttt{Gurobi} or \texttt{CPLEX}.
            SOS-1 constraints in \eqref{eq:miqp} guarantee that the global optimum can be reached, and generally outperform  big-$\mathcal{M}$ bounds when these are difficult to reasonably set.

	        The formulation in \eqref{eq:miqp} also, and critically, requires the tuning of $k_p$, $k_n$ and, if a ridge-like constraint is included in the model, $\lambda$. 
	        Performing this simultaneously along an extensive grid of values can be computationally unviable for MIP. 
	        We therefore proceed as follows: 
	        {\bf (i)} fix $\lambda$ (possibly, in turn, to a few values in a meaningful range);
            {\bf (ii)} fix $k_n$ to a starting value larger than a reasonable expectation on the amount of contamination in the problem ($n_0$);
            {\bf (iii)} holding fixed the $k_n$ starting value from (ii), tune $k_p$ through cross-validation or an information criterion; 
            {\bf (iv)} holding fixed the $k_p$ value selected in (iii), refine downward the value of $k_n$. 
            
            For cross-validation in (iii) we use the  computationally efficient integrated scheme introduced in \citet{kenney2018efficient}, robustifying the performance measure (the mean squared prediction error) with an upper trimmed sum.
            Choosing the trimming level is again not trivial, because cross-validation folds might contain different proportions of outliers.
            In order to be conservative, we fix the trimming proportion to $3 k_n / n $ on the test fold and to $ 2 k_n / n  $ on the training folds.
            For information criteria in (iii) the situation is more straightforward, as one can compute robust values for them using only cases identified as non-outlying in any given MIP run.
	        Refining $k_n$ downward in (iv) improves efficiency in estimating $\bm{\beta}$, which can be low if the starting $k_n$ in (ii) is substantially larger than the true $n_0$, excluding non-outlying cases from the fit.
	        Assuming that the selected model of size $k_p$ in (iii) is close to the true $p_0$ active features, iteratively reducing $k_n$ provides an effective strategy to pinpoint when outliers start to be included in the fit. 
	        Similarly to the Forward Search algorithm \citep{atkinson2000fs}, this can be done monitoring an appropriate statistic (e.g.,~the minimum absolute deletion residuals) along iterations.

      \subsection{Theoretical results}
        \label{subsec:theory}
        
        In this Section, we characterize the theoretical properties of our proposal through two groups of results. 
        The first comprises properties established under the general framework introduced in \eqref{eq:miqp}.
        The second comprises key properties established under an $L_2$-norm loss function ($\rho(\cdot)= \norm{\cdot}^2_2$); namely, the \textit{robustly strong oracle property} and {\em optimal parameter estimation} for SFSOD. 
        All proofs are provided in the Supplementary Material.
        
        Without much loss of generality, we assume that \eqref{eq:miqp} has a unique global minimum, and that the loss function is such that 
        $\rho( \bm{x} ) \geq 0 $ with $\rho(\bm{0}) = 0$ (this is the case for OLS and many other instances, such as estimation in quantile regression 
        and robust estimators).
        Our first result connects our proposal to a large class of penalized methods based on trimming.
        \begin{proposition}[Sparse trimming] \label{lemma:1}
            For any $\lambda$, $n$, $p$, $k_n$ and $k_p$,
            the $\hat{\bm{\beta}}$ estimator produced solving \eqref{eq:miqp} is the same as the one  produced solving 
            \begin{align} \label{eq:quival_lts}
                 \argmin_{\bm{\beta}} ~ & \frac{1}{n} \sum_{i=1}^{n-k_n} (\rho ( y_i - \bm{x}_i^T \bm{\beta} ) )_{i:n} 
                =
                \frac{1}{n} \sum_{i=1}^{n-k_n} ( \rho ( e_i ))_{i:n}  \\
                \mathrm{s.t.} \quad & \eqref{eq:miqp_bigMbeta}, \eqref{eq:mip_ridge}, \eqref{eq:mip_intB},  \eqref{eq:mip_B} \nonumber
            \end{align}
            where $e_i$ (for $i=1,\ldots,n$) are the residuals, and 
            $ (\rho ( e_1 ))_{1:n} \leq \ldots \leq (\rho ( e_n ))_{n:n} $ the order statistics of their $\rho(\cdot)$ transformation. 
        \end{proposition}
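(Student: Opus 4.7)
The plan is to profile out the outlier indicator variables $(\bm{\phi},\bm{z}^{\phi})$ from \eqref{eq:miqp} for any fixed pair $(\bm{\beta},\bm{z}^{\beta})$ satisfying the feature-side constraints \eqref{eq:miqp_bigMbeta}, \eqref{eq:mip_ridge}, \eqref{eq:mip_intB}, \eqref{eq:mip_B}, and to show that this partial minimization produces exactly the trimmed loss appearing in \eqref{eq:quival_lts}. Since $\rho$ in the objective is understood componentwise (so $\rho(\bm{y}-\bm{X}\bm{\beta}-\bm{\phi})=\sum_i \rho(e_i-\phi_i)$, with $e_i=y_i-\bm{x}_i^T\bm{\beta}$), the $\phi$-subproblem decouples across $i$.

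First I would analyze the per-coordinate subproblem. For each $i$, constraint \eqref{eq:miqp_bigMphi} forces $\phi_i=0$ whenever $z^{\phi}_i=0$, so the contribution is $\rho(e_i)$; whereas when $z^{\phi}_i=1$, $\phi_i$ is free inside $[-\mathcal{M}^{\phi}_i,\mathcal{M}^{\phi}_i]$, and the choice $\phi_i=e_i$ drives the contribution to $\rho(0)=0$, which is optimal because $\rho\ge 0$. This uses the standing assumption that the big-$\mathcal{M}$ bounds are chosen large enough to contain an optimum, i.e.\ $|e_i|\le\mathcal{M}^{\phi}_i$ at the active residuals; this is exactly the role of the ensemble-based bound construction described in Section~\ref{sec2:proposalTech} and the SOS-1 fallback \eqref{eq:sosp}--\eqref{eq:sosn}, which eliminates the bound altogether.

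Next I would optimize over $\bm{z}^{\phi}$. Given the per-coordinate analysis, the inner objective equals $\tfrac{1}{n}\sum_{i:z^{\phi}_i=0}\rho(e_i)$, and since $\rho\ge 0$ and the cardinality constraint \eqref{eq:mip_intP} reads $\sum_i z^{\phi}_i\le k_n$, it is optimal to set exactly $k_n$ of the indicators to one and to pick those corresponding to the $k_n$ largest values of $\rho(e_i)$. The residual sum then becomes the sum of the $n-k_n$ smallest order statistics $(\rho(e_i))_{i:n}$, i.e.\ precisely the trimmed objective of \eqref{eq:quival_lts}. Substituting this partial minimum back into \eqref{eq:miqp} leaves an equivalent program in $(\bm{\beta},\bm{z}^{\beta})$ with only the feature-side constraints, proving the claim and in particular equality of the $\hat{\bm{\beta}}$-argmin sets.

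The main obstacle, and the only real subtlety, is handling the big-$\mathcal{M}$ bound on $\bm{\phi}$: without the ``sufficiently large'' hypothesis, the substitution $\phi_i=e_i$ may be infeasible and the equivalence breaks. I would handle this by stating the result under the standard assumption that $\bm{\mathcal{M}}^{\phi}$ does not cut off any global optimizer (equivalently, by invoking the SOS-1 reformulation), which is the same assumption used implicitly when arguing that \eqref{eq:miqp} attains its unconstrained global optimum. Uniqueness of the minimum, also assumed at the start of Section~\ref{subsec:theory}, then guarantees that the $\hat{\bm{\beta}}$ returned by \eqref{eq:miqp} and by \eqref{eq:quival_lts} coincide rather than merely agreeing as sets.
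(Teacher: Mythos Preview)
Your proposal is correct and follows essentially the same profiling argument as the paper: fix a feasible $(\bm{\beta},\bm{z}^{\beta})$, minimize the separable inner problem in $(\bm{\phi},\bm{z}^{\phi})$ using $\rho\ge 0$ and $\rho(0)=0$, and observe that the best choice zeroes out the $k_n$ largest $\rho(e_i)$ to leave the trimmed sum. If anything, you are more careful than the paper, which silently replaces the big-$\mathcal{M}$ constraint \eqref{eq:miqp_bigMphi} by $\norm{\bm{\phi}}_0\le k_n$ without discussing feasibility of $\phi_i=e_i$; your explicit treatment of this point (and the SOS-1 fallback) is a welcome addition.
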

        \noindent
        Proposition~\ref{lemma:1} demonstrates the equivalence of our formulation to a trimmed loss problem, where the level of trimming is directly controlled by the $L_0$ constraint on $\bm{\phi}$. This extends a well-known result for unpenalized OLS and motivates the formulation in \eqref{eqSODFS} as a general framework for SFSOD.
        In particular, \eqref{eq:quival_lts} includes as special cases the LTS and the Trimmed Likelihood Estimator (\citealt{rousseeuw1984robust}; \citealt{hadi1997maximum}).
        Thus, our MIP approach for SFSOD inherits the desirable properties of these robust estimators.

        The largest proportion of outliers that an estimator can tolerate before becoming arbitrarily biased is referred to as the \textit{breakdown point}. 
        In symbols, consider a sample $\bm{Z} = (\bm{z}_1, \ldots, \bm{z}_n) $ with $ \bm{z}_i = (y_i, \bm{x}_i^T) $.
        The \textit{maximum bias} for an estimator, say $\bm{\tau}$, is
		$ b^*(n_0 ; \bm{\tau}, \bm{Z})=\sup_{\tilde{\bm{Z}}}  \norm{\bm{\tau}(\tilde{\bm{Z}})-\bm{\tau}(\bm{Z})}_2 $,
		where $\tilde{\bm{Z}}$ represents $\bm{Z}$ after the replacement of $n_0$ points by arbitrary values.
		The \textit{finite-sample replacement breakdown point} (BdP henceforth), defined as
		$ 
		\epsilon^*(\bm{\tau}, \bm{Z}) = 
		\min_{n_0} [n_0/n: b^*(n_0 ; \bm{\tau}, \bm{Z}) \rightarrow \infty ]  
		$,
		represents the maximum proportion of observations that, when arbitrarily replaced, still provide bounded estimates \citep{donoho1983notion}.
        Our second result shows that our MIP approach for SFSOD achieves arbitrarily large BdP.
        \begin{proposition}[Breakdown point] \label{th:bdp}
            For any $\lambda$, $n$, $p$, $k_n$ and $k_p$,
            the the BdP of the $\hat{\bm{\beta}}$ estimator produced solving \eqref{eq:miqp} is
            $\epsilon^* = ( k_n + 1) / n$. 
        \end{proposition}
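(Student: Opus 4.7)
\medskip

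\textbf{Plan.} By Proposition~\ref{lemma:1}, the estimator $\hat{\bm{\beta}}$ from \eqref{eq:miqp} equals the minimizer of the trimmed loss $\tfrac{1}{n}\sum_{i=1}^{n-k_n}(\rho(e_i))_{i:n}$ subject only to the $L_0$/ridge constraints on $\bm{\beta}$. This reformulation eliminates $\bm{\phi}$ and places the problem squarely in the family of $h$-trimmed estimators with $h=n-k_n$, whose BdP is classically $(n-h+1)/n=(k_n+1)/n$. The plan is therefore to establish the two inequalities $\epsilon^* \le (k_n+1)/n$ and $\epsilon^* \ge (k_n+1)/n$ separately, adapting the standard LTS-type arguments to accommodate the feature-selection constraint $\|\bm{\beta}\|_0 \le k_p$ and the ridge cap $\|\bm{\beta}\|_2^2 \le \lambda$.

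For the upper bound, I would exhibit a contamination of $k_n+1$ cases that forces breakdown. Pick any feasible direction $\bm{v}\in\mathbb{R}^p$ with $\|\bm{v}\|_0\le k_p$ and $\|\bm{v}\|_2^2 \le \lambda$, differing from the clean-data minimizer. Replace $k_n+1$ cases by $(\tilde{\bm{x}}_i,\tilde{y}_i)=(\bm{u},t\bm{u}^T\bm{v})$ for a fixed $\bm{u}$ in the support of $\bm{v}$ and a scalar $t\to\infty$. Because the $L_0$ constraint on $\bm{\phi}$ permits trimming at most $k_n$ indices, the trimmed objective must retain at least one of these contaminated points; any feasible $\bm{\beta}$ not aligned with $\bm{v}$ then incurs a loss growing like $t^2$, while alignment with $\bm{v}$ avoids this cost. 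Hence the minimizer is driven towards $\bm{v}$, which can be chosen to witness an arbitrarily large $b^*$ (either by letting $\lambda\to\infty$ so that the ridge cap is not binding, or by interpreting ``breakdown'' within the feasible ball as a switch to a qualitatively different support, as is standard when discussing BdP of constrained robust estimators).

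For the lower bound, suppose $n_0 \le k_n$ cases are replaced arbitrarily. I would compare the MIP optimum against an explicit feasible candidate for \eqref{eq:quival_lts} that trims exactly the $n_0$ contaminated indices (padded with any $k_n-n_0$ clean indices) and sets $\bm{\beta}$ to the constrained optimum on the remaining $n-k_n$ clean cases. This candidate's objective depends only on clean data and is therefore bounded uniformly in the contamination, so by optimality the MIP solution has bounded loss. The retained set of the MIP solution necessarily contains at least $n-2k_n$ clean cases (at most $k_n$ contaminated ones can lie inside any fit of size $n-k_n$); provided these clean cases identify $\bm{\beta}$ up to the imposed support and ridge constraints, bounded loss translates into a bound on $\|\hat{\bm{\beta}}\|_2$ uniform in the contamination, giving $b^*(n_0;\hat{\bm{\beta}},\bm{Z})<\infty$.

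The main obstacle is the identifiability step in the lower bound: one must guarantee that \emph{every} subset of $n-2k_n$ clean rows of $\bm{X}$, restricted to \emph{every} admissible support of size $k_p$, is informative enough (e.g., full column rank, or a suitable restricted eigenvalue bound) to convert bounded trimmed loss into bounded coefficient norm. This is a standard general-position assumption in the robust-regression literature, and I would state it explicitly (or absorb it into an appropriate regularity condition on $\bm{Z}$); the remaining manipulations are bookkeeping around the interaction between the trimming indicator $\bm{z}^\phi$ and the selection indicator $\bm{z}^\beta$.
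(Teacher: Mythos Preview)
Your plan is sound, but the paper takes a different and simpler route for the lower bound that sidesteps the obstacle you identify. Rather than invoking identifiability of clean subsets, the paper exploits the ridge term directly: since $\bm{\beta}=\bm{0}$ is always feasible, its trimmed objective on any contaminated sample with $n_0\le k_n$ is bounded by $h\,\rho(M_y)$, where $h=n-k_n$ and $M_y=\max_i|y_i|$ over the clean data (at least $h$ points remain clean, so the $h$ smallest transformed residuals are at most those of the clean responses). Treating the ridge constraint in Lagrangian form $\lambda^*\|\bm{\beta}\|_2^2$, any $\bm{\beta}$ with $\|\bm{\beta}\|_2^2 \ge (h\,\rho(M_y)+1)/\lambda^*$ would then have objective exceeding $Q(\bm{0})$, contradicting optimality. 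This yields a uniform bound on $\|\hat{\bm{\beta}}\|_2$ with no general-position assumption whatsoever; the ``main obstacle'' you flag simply does not arise. Your identifiability argument would still go through and has the advantage of not relying on the ridge term, so it is more general in that direction, but it is heavier machinery than the paper needs.

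For the upper bound the two arguments are essentially the same: with $n_0=k_n+1$ replacements at least one contaminated residual survives trimming, and its contribution can be driven to infinity. Your remark that a finite ridge cap formally bounds $\|\hat{\bm{\beta}}\|_2$ (making the classical BdP definition delicate) is a fair observation; the paper handles this by implicitly working with the penalized rather than constrained form, and separately cautions that the BdP can be misleading for non-equivariant estimators.
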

        \noindent
        Thus, $ k_n \geq n_0 $ is the only requirement to achieve the largest possible BdP.
        Similar results were obtained for the low-dimensional Least Quantile Estimator \citep{bertsimas2014least}, the LTS estimator with a Lasso penalty \citep{alfons2013sparse},
        and MM-estimators with a ridge or elastic net penalty 
        (\citealt{maronna2011robust}; \citealt{kurnaz2017robust}). 
        However, there are two caveats: the BdP can be misleading for non-equivariant estimators \citep{smucler2017robust}, and it only guarantees against the worst-case scenario -- infinite maximum bias -- as it does not account for the presence of large but finite biases in $\hat{\bm{\beta}}$. 
        This motivates the development of the robustly strong oracle property as well as optimal parameter estimation under the $L_2$-norm to provide additional theoretical guarantees. 
        \\
	   
        Next, we exclude the ridge-like penalty and take $\rho(\cdot) = \norm{ \cdot }_2^2$, making \eqref{eq:miqp} a \textit{Mixed-Integer Quadratic Program} (MIQP).  In this setting, we prove that under certain  conditions our approach satisfies the \textit{robustly strong oracle property} (see Definition~\ref{def:strongoracle}, based on \citealt{fan2014strong}). 
        In the following, we use the $L_0$ sparsity assumption on $\bm{\beta}$ and $\bm{\phi}$ as in \citet{zhang2012general}.
        Recall that MSOM leads to outlier removal (see Section~\ref{sec1:background}), and we showed in Proposition~\ref{lemma:1} that the $L_0$ constraint on $\bm{\phi}$ controls the level of trimming from the fit, thus this sparsity assumption on $\bm{\phi}$ is equivalent to the presence of MSOM outliers.
        In our SFSOD problem, as customary in feature selection literature, let $ \bm{\theta}_0 = ( \bm{\beta}^T_0 , \bm{\phi}^T_0 )^T \in \mathbb{R}^{p+n} $ be the true parameter vector, and decompose it
        as
        $
        \bm{\theta}_0 =  
        ( \bm{\theta}_{\mathcal{S}}^T , \bm{\theta}_{\mathcal{S}^c}^T )^T = 
        [ ( \bm{\beta}^T_{\mathcal{S}_\beta}  , \bm{\phi}^T_{\mathcal{S}_\phi} ) , ( \bm{\beta}^T_{\mathcal{S}^c_\beta} , \bm{\phi}^T_{\mathcal{S}^c_\phi} ) ]^T 
        $
        where
        $ \bm{\theta}_{\mathcal{S}} $ 
        contains only the true non-zero regression coefficients. 
        Let $\hat{\bm{\theta}}_0$ be the \textit{robust oracle estimator}
        $
        \hat{\bm{\theta}}_0 = (\bm{A}_{\mathcal{S}}^T \bm{A}_{\mathcal{S}})^{-1} \bm{A}_{\mathcal{S}}^T \bm{y}
        $,
        where 
        $ 
        \bm{A}_{\mathcal{S}} =  [\bm{X}_{\mathcal{S}_\beta}, \bm{I}_{\mathcal{S}_\phi} ]$ is 
        the $n \times ( p_0 + n_0 )$ matrix restricted to the active features belonging to
        $ \mathcal{S}_\beta$ and the outlying cases belonging to $\mathcal{S}_\phi 
        $.
        $\hat{\bm{\theta}}_0$ behaves as if the sets of active features and outliers are both known in advance. 
        
        \begin{definition}[Robustly strong oracle property]\label{def:strongoracle}
        An estimator $\hat{\bm{\beta}}$ 
        with support $\hat{\mathcal{S}}_\beta$ satisfies the robustly strong oracle property if (asymptotically) there exists tuning parameters which guarantee 
        $
        P ( \hat{\mathcal{S}}_\beta = \mathcal{S}_\beta ) \geq 
        P ( \hat{\bm{\beta}} = \hat{\bm{\beta}}_0 ) \to 1 
        $
        in presence of MSOM outliers.
        \end{definition}
        \noindent
        Such robust version of the oracle property is stronger and more general than the oracle property in the sense of \citet{fan2001variable},
        as it implies both SFSOD consistency and sign consistency (see also \citealt{bradic2011penalized}).
        Thus, SFSOD consistency depends on the achievability of the robust oracle estimator which we investigate by extending the theoretical framework developed by \citet{shen2013constrained} for feature selection.
        This requires weaker assumptions compared to other penalization methods \citep{zhang2012general}, and we generalize it to the presence of MSOM outliers. 
        Intuitively, if the robust oracle estimator is achievable (i.e.~if it has the lowest in-sample mean squared error for models of the same size), it is also the solution of our MIQP when the integer constraints are set to $k_p = p_0 $ and $k_n = n_0 $.
        Achievability depends on the difficulty of the problem, as measured by the \textit{minimal degree of separation} between the true model and a least favorable model -- indexed by the supports $ \mathcal{S} $ and $ \tilde{\mathcal{S}} $, respectively.
        This is defined as 
        \[
            \Delta_{\theta}(\bm{A}) = 
             \underset{\substack{\bm{\theta}_{\tilde{\mathcal{S}}}: \tilde{\mathcal{S}} \neq \mathcal{S}  ,
             \lvert \tilde{\mathcal{S}}_{\beta} \rvert \leq p_0 ,
             \lvert \tilde{\mathcal{S}}_{\phi} \rvert \leq n_0}}
             {\min}
            \frac{\norm{ \bm{A}_{\mathcal{S}} \bm{\theta}_{\mathcal{S}} - \bm{A}_{\tilde{\mathcal{S}}} \bm{\theta}_{\tilde{\mathcal{S}}} }^2_2}
            {n \max(\lvert \mathcal{S} \backslash \tilde{\mathcal{S}} \rvert, 1 )} ,
        \]
        which relates to the signal-to-noise-ratio and can be bounded as $ \Delta_{\theta} \leq \Delta_{\beta}  +  \Delta_{\phi} 
        $ (with $ \Delta_{\beta}$ and $ \Delta_{\phi}  $ defined similarly to $ \Delta_{\theta} $ using $\bm{X}$ and $\bm{I}_n$, respectively). We control this level of difficulty in Proposition~\ref{th:necessCond}, which provides a \textit{necessary} condition for 
        SFSOD
        consistency over $B(u,l) = \{ \bm{\theta} : \norm{ \bm{\theta} }_0 \leq u , \Delta_{\theta} \geq l \}$, the $L_0$-band with upper and lower radii $u$ and $l$, respectively (a subset of the $L_0$-ball $B(u,0)$ excluding a neighborhood of the origin). 
        \begin{proposition}[Necessary condition for SFSOD consistency] \label{th:necessCond}
            For any support estimate $ \hat{\mathcal{S}}$ and $u>l>0$, 
            $ \sup_{\bm{\theta}_0 \in B(u,l)}
                    P ( \hat{\mathcal{S}} = \mathcal{S} ) \to 1 $
            implies that
            \begin{equation} \label{eq:regul_cond}
                \Delta_{\theta} \geq l = \frac{\sigma^2}{n}  \max \left\{ d_\beta \log (p) , 
                 d_\phi \log (n) \right\}, 
            \end{equation}
            where $d_\beta > 0$  (which may depend on $\bm{X}$) and $d_\phi > 0$ are constants independent of $n$ and $p$.
        \end{proposition}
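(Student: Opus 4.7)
The plan is to establish the necessary condition via a Fano-type minimax lower bound, extending the feature-selection argument of \citet{shen2013constrained} to the joint SFSOD setting. Because the separation decomposes as $\Delta_\theta \le \Delta_\beta + \Delta_\phi$, the natural strategy is to reduce the problem to two parallel multiple-hypothesis testing sub-problems, one isolating the feature supports and one isolating the outlier supports, and then combine the resulting lower bounds by taking the worse of the two.

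For the feature-selection sub-problem I would construct a ``swap-one'' hypothesis class as follows. Fix $\mathcal{S}_\phi$ and its coefficients $\bm{\phi}_{\mathcal{S}_\phi}$ at some common configuration. For each inactive index $j \in \{1,\ldots,p\} \setminus \mathcal{S}_\beta$, form an alternative $\tilde{\bm{\theta}}^{(j)}$ whose feature support is obtained from $\mathcal{S}_\beta$ by removing one fixed active index and inserting $j$, with coefficient magnitudes calibrated so that $\tilde{\bm{\theta}}^{(j)} \in B(u,l)$ and $\norm{\bm{A}_{\mathcal{S}}\bm{\theta}_{\mathcal{S}} - \bm{A}_{\tilde{\mathcal{S}}^{(j)}}\tilde{\bm{\theta}}^{(j)}_{\tilde{\mathcal{S}}^{(j)}}}_2^2$ attains, up to a constant, the worst case $n\Delta_\beta$ appearing in the definition of $\Delta_\beta$. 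Under the Gaussian error model, the KL divergence between $N(\bm{A}_\mathcal{S}\bm{\theta}_\mathcal{S},\sigma^2\bm{I}_n)$ and $N(\bm{A}_{\tilde{\mathcal{S}}^{(j)}}\tilde{\bm{\theta}}^{(j)}_{\tilde{\mathcal{S}}^{(j)}},\sigma^2\bm{I}_n)$ equals $\tfrac{1}{2\sigma^2}\norm{\bm{A}_\mathcal{S}\bm{\theta}_\mathcal{S} - \bm{A}_{\tilde{\mathcal{S}}^{(j)}}\tilde{\bm{\theta}}^{(j)}_{\tilde{\mathcal{S}}^{(j)}}}_2^2 \le \tfrac{n}{2\sigma^2}\Delta_\beta$. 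Applying Fano's inequality to the class of $M \asymp p - p_0$ such alternatives, uniform consistency of $\hat{\mathcal{S}}$ over $B(u,l)$ forces
\begin{equation*}
1 - \frac{\max_{j}\mathrm{KL}(\bm{\theta}\,\|\,\tilde{\bm{\theta}}^{(j)}) + \log 2}{\log M} \to 0,
\end{equation*}
so that $\tfrac{n\Delta_\beta}{2\sigma^2} \gtrsim \log(p - p_0)$, whence $\Delta_\beta \ge d_\beta \sigma^2 \log(p)/n$ for a constant $d_\beta$ that absorbs the spectral behavior of the relevant sub-matrices of $\bm{X}$.

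Repeating the construction with $\mathcal{S}_\beta$ fixed and $\mathcal{S}_\phi$ swapped one index at a time over the $n - n_0$ non-outlying cases -- so that $\bm{I}_{\mathcal{S}_\phi}$ takes the role of $\bm{X}_{\mathcal{S}_\beta}$ -- yields $\Delta_\phi \ge d_\phi \sigma^2 \log(n)/n$ by the same chain of reasoning, with $d_\phi$ essentially universal since $\bm{I}_n$ is orthonormal. Combining via $\Delta_\theta \le \Delta_\beta + \Delta_\phi$ and taking the tighter of the two constraints (inflating constants as needed) delivers the claimed lower bound $l = \tfrac{\sigma^2}{n}\max\{d_\beta \log(p),\, d_\phi \log(n)\}$.

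The main obstacle I anticipate is the explicit construction of the alternatives $\tilde{\bm{\theta}}^{(j)}$ that simultaneously (i) lie inside the $L_0$-band $B(u,l)$, (ii) realize $\Delta_\beta$ (respectively $\Delta_\phi$) \emph{tightly} rather than loosely, and (iii) keep the maximum pairwise KL uniformly controlled. Step (ii) for the feature component is the delicate point because $\bm{X}_{\mathcal{S}_\beta}$ and $\bm{X}_{\tilde{\mathcal{S}}^{(j)}_\beta}$ are not orthogonal in general, so the achievable minimum in the definition of $\Delta_\beta$ is governed by restricted-eigenvalue-type quantities of $\bm{X}$. The dependence of $d_\beta$ on $\bm{X}$ flagged in the statement is precisely where these spectral quantities surface, and pinning them down requires care when $p$ is allowed to grow exponentially with $n$.
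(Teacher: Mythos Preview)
Your proposal is essentially correct and follows the same route as the paper: a Fano-type lower bound applied separately to a feature-selection sub-problem and an outlier-detection sub-problem, extending \citet{shen2013constrained}, with the final necessary condition obtained as the maximum of the two resulting thresholds. One small point: the combination step should not run through the inequality $\Delta_\theta \le \Delta_\beta + \Delta_\phi$ (an upper bound on $\Delta_\theta$ cannot by itself yield a lower bound on $l$); the paper instead combines via the elementary probability inequality $P(\hat{\mathcal{S}} \neq \mathcal{S}) \ge 1 - \min\{P(\hat{\mathcal{S}}_\beta = \mathcal{S}_\beta),\, P(\hat{\mathcal{S}}_\phi = \mathcal{S}_\phi)\}$, which directly forces both Fano conditions to hold and hence delivers the $\max$.
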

        \noindent
        This lower bound on $\Delta_{\theta}$ indicates one can focus on solving the most difficult task
        between outlier detection and feature selection; if this is achievable, \textit{a fortiori}, the other will be as well.
        Next, we provide a \textit{sufficient} condition for 
        SFSOD
        consistency based on a finite-sample result bounding the probability that our proposal differs from the robust oracle estimator.
        Note that for our formulation 
        $ 
        \mathcal{ \hat{S}}^{L_0} = \mathcal{S} \iff   \hat{\bm{\theta}}_{L_0} = \hat{\bm{\theta}}_0 
        $.
        \begin{proposition}[Oracle reconstruction for MIQP] \label{th:oracle_bound}
            For any $ n, p, n_0$ and $p_0$, the $\hat{\bm{\theta}}_{L_0}$ estimator produced solving \eqref{eq:miqp} with $k_p = p_0$ and $k_n = n_0$ is such that
            \begin{align} \label{eq:oracle_finite}
                    P\left( \hat{\bm{\theta}}_{L_0} \neq \hat{\bm{\theta}}_0 \right) 
                    \leq& \frac{  5e - 1 }{e-1}
                    \max \left\{ 
                        \exp{ \left( - \frac{n}{18 \sigma^2} \left[ \Delta_{\beta_0} - 36 \sigma^2 \frac{\log(p)}{n} \right] \right) } , \right. \nonumber \\ 
                        &   \left. \exp{ \left( - \frac{n}{18 \sigma^2} \left[ \Delta_{\phi_0} - 36 \sigma^2 \frac{\log(n)}{n} \right] \right) }
                     \right\}
                    .
            \end{align}
        \end{proposition}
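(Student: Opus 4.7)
The plan is to reduce the event $\{\hat{\bm{\theta}}_{L_0}\neq\hat{\bm{\theta}}_0\}$ to a union over candidate supports, bound each summand by a Gaussian concentration argument, and control the combinatorial count by the definitions of $\Delta_{\beta_0}$ and $\Delta_{\phi_0}$. First I would observe that, since with $k_p=p_0$ and $k_n=n_0$ the MIQP in \eqref{eq:miqp} minimizes $\|\bm{y}-\bm{A}_{\tilde{\mathcal{S}}}\hat{\bm{\theta}}_{\tilde{\mathcal{S}}}\|_2^2$ over all supports $\tilde{\mathcal{S}}$ with $|\tilde{\mathcal{S}}_\beta|\leq p_0$, $|\tilde{\mathcal{S}}_\phi|\leq n_0$, and since on each fixed support the OLS fit is unique, $\{\hat{\bm{\theta}}_{L_0}\neq\hat{\bm{\theta}}_0\}\subseteq\bigcup_{\tilde{\mathcal{S}}\neq\mathcal{S}}E_{\tilde{\mathcal{S}}}$, where $E_{\tilde{\mathcal{S}}}=\{\|(I-P_{\tilde{\mathcal{S}}})\bm{y}\|_2^2\leq\|(I-P_{\mathcal{S}})\bm{y}\|_2^2\}$ and $P_{\mathcal{S}}=\bm{A}_{\mathcal{S}}(\bm{A}_{\mathcal{S}}^T\bm{A}_{\mathcal{S}})^{-1}\bm{A}_{\mathcal{S}}^T$.

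Next, substituting $\bm{y}=\bm{A}_{\mathcal{S}}\bm{\theta}_{\mathcal{S}}+\bm{\varepsilon}$ and expanding, $E_{\tilde{\mathcal{S}}}$ becomes
\[
G_{\tilde{\mathcal{S}}}+2\langle(I-P_{\tilde{\mathcal{S}}})\bm{A}_{\mathcal{S}}\bm{\theta}_{\mathcal{S}},\,\bm{\varepsilon}\rangle\leq\|P_{\tilde{\mathcal{S}}}\bm{\varepsilon}\|_2^2-\|P_{\mathcal{S}}\bm{\varepsilon}\|_2^2,
\]
where the deterministic gap $G_{\tilde{\mathcal{S}}}=\|(I-P_{\tilde{\mathcal{S}}})\bm{A}_{\mathcal{S}}\bm{\theta}_{\mathcal{S}}\|_2^2$ satisfies, by the definition of $\Delta_{\theta}$ and the decomposition $\Delta_{\theta}\leq\Delta_{\beta}+\Delta_{\phi}$, the lower bound $G_{\tilde{\mathcal{S}}}\geq n\max(|\mathcal{S}\setminus\tilde{\mathcal{S}}|,1)\,\Delta_{\theta}$. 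The cross term is a mean-zero Gaussian with variance $\sigma^2 G_{\tilde{\mathcal{S}}}$, and the quadratic term $\|P_{\tilde{\mathcal{S}}}\bm{\varepsilon}\|_2^2$ is a $\sigma^2\chi^2$ of rank at most $|\tilde{\mathcal{S}}|$. Applying the Gaussian tail for the cross term and the standard sub-exponential tail for chi-squared (e.g.\ Laurent--Massart), one gets, for each $\tilde{\mathcal{S}}$ with $|\mathcal{S}\setminus\tilde{\mathcal{S}}|=j_\beta+j_\phi\geq 1$,
\[
P(E_{\tilde{\mathcal{S}}})\leq 2\exp\!\left(-\frac{n(j_\beta+j_\phi)}{18\sigma^2}\Delta_{\theta}\right),
\]
where the constant $1/18$ comes from splitting the slack between the linear and quadratic fluctuations (each absorbed by one third of $G_{\tilde{\mathcal{S}}}$) and applying the tail bounds.

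I would then partition the wrong supports by the pair $(j_\beta,j_\phi)$ of swap counts in the $\beta$- and $\phi$-blocks; the number of such $\tilde{\mathcal{S}}$ is at most $\binom{p-p_0}{j_\beta}\binom{p_0}{j_\beta}\binom{n-n_0}{j_\phi}\binom{n_0}{j_\phi}\leq p^{2j_\beta}n^{2j_\phi}/(j_\beta!j_\phi!)$, while on such $\tilde{\mathcal{S}}$ the gap satisfies $G_{\tilde{\mathcal{S}}}\geq n\max(j_\beta,1)\Delta_{\beta_0}+n\max(j_\phi,1)\Delta_{\phi_0}$ when both blocks are perturbed, and reduces to just one term otherwise. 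A union bound over $(j_\beta,j_\phi)$ yields two geometric series, one in $j_\beta$ with ratio $\exp(2\log p-\tfrac{n\Delta_{\beta_0}}{18\sigma^2})$ and one in $j_\phi$ with ratio $\exp(2\log n-\tfrac{n\Delta_{\phi_0}}{18\sigma^2})$. Finally, bounding each geometric sum by $(1-e^{-1})^{-1}=e/(e-1)$ whenever the exponent is at least $1$ (which is the regime where the bound is non-trivial and accounts for the $-36\sigma^2\log(p)/n$ and $-36\sigma^2\log(n)/n$ offsets, with $36=2\cdot 18$), and merging the two sums via $a+b\leq 2\max(a,b)$ gives the prefactor $\tfrac{2e}{e-1}\cdot\tfrac{e}{e-1}\cdot 2=\tfrac{5e-1}{e-1}$ after careful bookkeeping of the cross term $(j_\beta,j_\phi)$ with both positive.

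The main obstacle will be the combinatorial/accounting step: tracking the overlap between the three regimes ($j_\beta\geq 1, j_\phi=0$), ($j_\beta=0,j_\phi\geq 1$) and ($j_\beta,j_\phi\geq 1$), using $\Delta_{\theta}\leq\Delta_{\beta_0}+\Delta_{\phi_0}$ to lower bound $G_{\tilde{\mathcal{S}}}$ by $\max(\Delta_{\beta_0},\Delta_{\phi_0})$ pieces, and then recombining the two geometric series so that the final constant reduces exactly to $(5e-1)/(e-1)$ and the exponents match the stated form with $36=2\cdot 18$ and the cross-term inside the $\max$. The concentration step itself is standard Laurent--Massart plus Gaussian tail; the delicate part is selecting the splitting of $G_{\tilde{\mathcal{S}}}$ among the linear, quadratic and offset pieces so as to produce the tightest constants consistent with \eqref{eq:oracle_finite}.
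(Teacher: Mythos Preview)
Your overall architecture---reduce to a union over wrong supports, bound each via Gaussian/chi-squared tails, then sum a geometric series---is the right shape, but two concrete steps do not go through, and the paper's route is structurally different.

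\textbf{Where your argument breaks.} First, the separation of the gap: you claim $G_{\tilde{\mathcal{S}}}\geq n\max(j_\beta,1)\Delta_{\beta_0}+n\max(j_\phi,1)\Delta_{\phi_0}$ and later invoke ``$\Delta_{\theta}\leq\Delta_{\beta_0}+\Delta_{\phi_0}$ to lower bound $G_{\tilde{\mathcal{S}}}$''. But $\Delta_{\theta}\leq\Delta_{\beta}+\Delta_{\phi}$ is an \emph{upper} bound on $\Delta_{\theta}$; it cannot be used to lower bound $G_{\tilde{\mathcal{S}}}$, and the additive decomposition of the joint projection gap into a $\beta$-piece and a $\phi$-piece does not follow from the definitions. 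Second, your constant computation $\tfrac{2e}{e-1}\cdot\tfrac{e}{e-1}\cdot 2=\tfrac{5e-1}{e-1}$ is simply false (the left side equals $4e^2/(e-1)^2$), so the ``careful bookkeeping'' you allude to cannot produce the stated prefactor.

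\textbf{How the paper proceeds.} The paper avoids the joint $(j_\beta,j_\phi)$ combinatorics altogether by decoupling at the very first step via the union bound
\[
P\bigl(\hat{\bm{\theta}}_{L_0}\neq\hat{\bm{\theta}}_0\bigr)\;\leq\;P\bigl(\hat{\bm{\beta}}_{L_0}\neq\hat{\bm{\beta}}_0\bigr)+P\bigl(\hat{\bm{\phi}}_{L_0}\neq\hat{\bm{\phi}}_0\bigr),
\]
and then applies the argument of \citet{shen2013constrained}, Theorem~2, \emph{separately} to each term---once with design $\bm{X}$ (giving the $\Delta_{\beta_0}$, $\log p$ piece) and once with design $\bm{I}_n$ (giving the $\Delta_{\phi_0}$, $\log n$ piece). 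This yields
\[
P\bigl(\hat{\bm{\theta}}_{L_0}\neq\hat{\bm{\theta}}_0\bigr)\;\leq\;\frac{4e}{e-1}\max\bigl\{R(a),R(b)\bigr\},\qquad R(x)=\frac{x}{1-x},
\]
with $a,b$ the two exponentials in \eqref{eq:oracle_finite}. The constant $(5e-1)/(e-1)$ then arises not from multiplying series factors but from combining this with the trivial bound $P\leq 1$: if $m=\max\{a,b\}\geq (e-1)/(5e-1)$ then $\tfrac{5e-1}{e-1}m\geq 1\geq P$, while if $m<(e-1)/(5e-1)$ then $\tfrac{4e}{e-1}\tfrac{m}{1-m}\leq\tfrac{5e-1}{e-1}m$. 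So the constant is an artifact of converting $R(m)$ to $m$ under the cap $P\leq 1$, not of the combinatorial cross term you were tracking.
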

        \noindent 
        Based on these results, one can easily prove the robustly strong oracle property as follows. 
         \begin{proposition}[MIQP robustly strong oracle property] \label{th:oracle_consistency}
                Assume that $ u _\theta= u_{\phi} + u_{\beta} $,
                where $u_{\phi} < n - k_p$ and 
                $u_{\beta} < \min(n - k_n, p)$, 
                and that there exists a constant $d_\theta > 36$ 
                such that $l_{\theta} = d_\theta \sigma^2 / n \max  \left\{ \log(p), \log(n) \right\}$.
                Then, under \eqref{eq:regul_cond} and for $(n,p) \to \infty$, the estimator $\hat{\bm{\theta}}_{L_0}$ produced solving \eqref{eq:miqp} with $k_p = p_0$ and $k_n = n_0$ satisfies
            \begin{enumerate}
                \item Robustly strong oracle property:
                    $
                    \sup_{\bm{\theta}_0 \in B(u_\theta,l_\theta)}
                    P ( \hat{\mathcal{S}}^{L_0} = \mathcal{S} ) \geq
                    \sup_{\bm{\theta}_0 \in B(u_\theta,l_\theta)}
                    P ( \hat{\bm{\theta}}_{L_0} = \hat{\bm{\theta}}_0 ) 
                    \to 1 $
                     uniformly over 
                    $
                      B(u_\theta,l_\theta) = \{ \bm{\theta} : \norm{\bm{\theta}}_0 = (p_0 + n_0) \leq u_\theta , \Delta_{\theta} \geq l_\theta \}  .
                    $

             \item Asymptotic normality:
                   $ 
                   \sqrt{n} ( \hat{\bm{\theta}}_{L_0} - \bm{\theta}_0 ) \to^d N ( \bm{0}, \bm{\Sigma}_{\theta}^{-1} )   
                   $,
                   where
                   $ \bm{\Sigma}_{\theta} = \sigma^2 
                    ( \bm{A}_{\mathcal{S}}^T \bm{A}_{\mathcal{S}} /n )^{-1}
                    $.
            \end{enumerate}
        \end{proposition}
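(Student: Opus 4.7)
The plan is to reduce Proposition~\ref{th:oracle_consistency} to the finite-sample bound furnished by Proposition~\ref{th:oracle_bound}, and then to deduce the limiting distribution from classical OLS asymptotics once the MIQP solution has been identified with the oracle fit with high probability.

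For part (1), I would first observe that the two assertions
$
\{\hat{\mathcal{S}}^{L_0}=\mathcal{S}\}
$
and
$
\{\hat{\bm{\theta}}_{L_0}=\hat{\bm{\theta}}_0\}
$
are linked by the one-way inclusion noted right before Proposition~\ref{th:oracle_bound}, so it suffices to control the latter event. I would then plug the lower bound $\Delta_{\theta}\ge l_\theta = d_\theta\sigma^{2}\max\{\log p,\log n\}/n$ with $d_\theta>36$ into the finite-sample inequality~\eqref{eq:oracle_finite}. The crucial step is to translate that lower bound on the joint separation into lower bounds on the marginal separations $\Delta_{\beta_0}$ and $\Delta_{\phi_0}$ appearing in Proposition~\ref{th:oracle_bound}; this is where the decomposition $\Delta_\theta \le \Delta_\beta + \Delta_\phi$ (and the restrictions $u_\beta<\min(n-k_n,p)$, $u_\phi<n-k_p$ ensuring well-definedness of the oracle sub-problems) is used, together with the fact that the uniformity requirement over $B(u_\theta,l_\theta)$ only demands that both marginal exponents diverge. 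Once both $\Delta_{\beta_0}-36\sigma^{2}\log(p)/n$ and $\Delta_{\phi_0}-36\sigma^{2}\log(n)/n$ are bounded below by $(d_\theta-36)\sigma^{2}\log(\max\{p,n\})/n>0$, the two exponents inside the $\max$ in~\eqref{eq:oracle_finite} dominate $-((d_\theta-36)/18)\log(\max\{p,n\})$, which tends to $-\infty$ as $(n,p)\to\infty$. Taking a supremum over $\bm{\theta}_0\in B(u_\theta,l_\theta)$ is harmless because the bound is uniform in $\bm{\theta}_0$ once $\Delta_\theta\ge l_\theta$ is imposed, yielding the stated convergence to one.

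For part (2), I would work on the event $E_n=\{\hat{\bm{\theta}}_{L_0}=\hat{\bm{\theta}}_0\}$, which by part (1) satisfies $P(E_n)\to 1$. On $E_n$, the MIQP estimator coincides with the oracle OLS fit $\hat{\bm{\theta}}_0=(\bm{A}_{\mathcal{S}}^{T}\bm{A}_{\mathcal{S}})^{-1}\bm{A}_{\mathcal{S}}^{T}\bm{y}$. Because outlying cases are absorbed by $\bm{\phi}_0$ through the MSOM parametrization, one has $\bm{y}=\bm{A}_{\mathcal{S}}\bm{\theta}_{\mathcal{S}}+\bm{\varepsilon}$ with $\bm{\varepsilon}\sim N(\bm 0,\sigma^{2}\bm{I}_n)$, so
\[
\sqrt{n}(\hat{\bm{\theta}}_{0}-\bm{\theta}_{0})=\bigl(\bm{A}_{\mathcal{S}}^{T}\bm{A}_{\mathcal{S}}/n\bigr)^{-1}\,\bm{A}_{\mathcal{S}}^{T}\bm{\varepsilon}/\sqrt{n},
\]
which is exactly $N\!\bigl(\bm 0,\sigma^{2}(\bm{A}_{\mathcal{S}}^{T}\bm{A}_{\mathcal{S}}/n)^{-1}\bigr)$ at every finite $n$. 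Transferring this to $\hat{\bm{\theta}}_{L_0}$ is then a one-line Slutsky/coupling argument: $P(\hat{\bm{\theta}}_{L_0}\neq\hat{\bm{\theta}}_0)\to 0$ forces their distributions to have the same weak limit.

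The principal obstacle is part (1): specifically, ensuring that the single lower bound on $\Delta_\theta$ really does control both exponentials in Proposition~\ref{th:oracle_bound} uniformly over the $L_0$-band. The decomposition invoked above is straightforward when the $\tilde{\mathcal{S}}$ attaining the minima in $\Delta_\beta$ and $\Delta_\phi$ can be combined into a single alternative support realizing (up to a constant) the minimum in $\Delta_\theta$; verifying that the cardinality constraints $|\tilde{\mathcal{S}}_\beta|\le p_0$, $|\tilde{\mathcal{S}}_\phi|\le n_0$ remain satisfied under such combinations, and that the denominator $n\max(|\mathcal{S}\setminus\tilde{\mathcal{S}}|,1)$ does not inflate the ratio, is the delicate bookkeeping step. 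Everything else is essentially a consequence of Proposition~\ref{th:oracle_bound} and standard Gaussian CLT arguments.
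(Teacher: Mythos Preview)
Your proposal is correct and follows essentially the same route as the paper. Part~(1) in the paper is a one-line reference back to Proposition~\ref{th:oracle_bound} (``pointwise bound of~\eqref{eq:oracle_finite} over $B(u_\theta,l_\theta)$''), which is exactly what you do; Part~(2) in the paper writes the oracle estimator as $\bm{\theta}_0 + (\bm{A}_{\mathcal{S}}^T\bm{A}_{\mathcal{S}})^{-1}\bm{A}_{\mathcal{S}}^T\bm{\varepsilon}$ on the high-probability event and reads off Gaussianity, which is your Slutsky/coupling argument. One small remark: the paper records $\hat{\mathcal{S}}^{L_0}=\mathcal{S}\iff\hat{\bm{\theta}}_{L_0}=\hat{\bm{\theta}}_0$ as an equivalence rather than a one-way inclusion, though only the direction you use is needed; and your flagged ``principal obstacle'' (deducing separate control of $\Delta_{\beta_0}$ and $\Delta_{\phi_0}$ from $\Delta_\theta\ge l_\theta$) is treated equally tersely in the paper, so you are not missing an argument that the authors supply.
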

        \noindent
        Proposition~\ref{th:oracle_consistency}(1) provides a sufficient condition for SFSOD consistency 
        and the robust oracle reconstruction
        up to a constant term $d_\theta $.
        Note that the number of features is allowed to exponentially increase with the sample size -- so these properties hold also in ultra-high dimensional settings where $ p = O(e^{n \alpha})$ with $\alpha = \frac{\Delta_{\theta}}{d_\theta \sigma^2} >0$.
        Proposition~\ref{th:oracle_consistency}(2) guarantees asymptotic efficiency of MIQP estimates, which achieve the Cramér--Rao lower bound as if the true sets of features and outliers are known a priori, and allows one to perform statistical inference.
        Importantly, existing penalized M-estimators provide weaker results under stronger assumptions \citep{,loh2017statistical,smucler2017robust,amato2020penalised}.
        We conclude with a result showing that our proposal attains optimal parameter estimation with respect to the $L_2$-norm in the presence of MSOM outliers.
        \begin{proposition}[Optimal parameter estimation] \label{th:proposition}
            Under the same conditions of Proposition~\ref{th:oracle_consistency}, the estimator $\hat{\bm{\theta}}_{L_0}$ produced solving \eqref{eq:miqp} with $k_p = p_0$ and $k_n = n_0$
            provides 
            \begin{enumerate}
                \item Optimal $L_2$-norm prediction error:
                 $
                    n^{-1} E \norm{ A(\hat{\bm{\theta}}_{L_0} - \bm{\theta}_0 ) }_2^2 
                    = \sigma^2 (p_0+n_0) / n
                  $.
                  \item Risk-minimax optimality for parameter estimation:
                  $
                    \sup_{\bm{\theta}_0 \in B(u_\theta,l_\theta)}
                        n^{-1} E \norm{ \bm{A}(\hat{\bm{\theta}}_{L_0} - \bm{\theta}_0 ) }_2^2 
                        = \inf_{\bm{\tau}_n }
                           \sup_{\bm{\theta}_0 \in B(u_\theta,l_\theta)} 
                            n^{-1} E \norm{ \bm{A}(\bm{\tau}_n - \bm{\theta}_0 ) }_2^2 = \sigma^2 u_\theta / n
                           .
                   $
              \end{enumerate}
        \end{proposition}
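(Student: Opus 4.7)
The plan is to leverage the robustly strong oracle property from Proposition~\ref{th:oracle_consistency} to reduce the analysis to that of the restricted OLS estimator on the true support $\mathcal{S}$, whose sampling distribution is explicit under Gaussian noise. Part~(1) then becomes the expected squared norm of a Gaussian projection, and Part~(2) combines this upper bound with a classical sparse-minimax lower bound on the $L_0$-band $B(u_\theta, l_\theta)$.

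\textbf{Part (1).} Define the oracle reconstruction event $E_n = \{\hat{\bm{\theta}}_{L_0} = \hat{\bm{\theta}}_0\}$. Under the MSOM with true support $\mathcal{S}$ we have $\bm{y} = \bm{A}_{\mathcal{S}} \bm{\theta}_{0,\mathcal{S}} + \bm{\varepsilon}$, so on $E_n$,
\[
\bm{A}(\hat{\bm{\theta}}_{L_0} - \bm{\theta}_0) \;=\; \bm{A}_{\mathcal{S}}(\bm{A}_{\mathcal{S}}^T \bm{A}_{\mathcal{S}})^{-1} \bm{A}_{\mathcal{S}}^T \bm{\varepsilon} \;=\; \bm{P}_{\mathcal{S}} \bm{\varepsilon},
\]
the orthogonal projection of $\bm{\varepsilon}$ onto a subspace of dimension $p_0 + n_0$. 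Since $\bm{\varepsilon} \sim N(\bm{0}, \sigma^2 \bm{I}_n)$, $E\norm{\bm{P}_{\mathcal{S}} \bm{\varepsilon}}_2^2 = \sigma^2 (p_0+n_0)$. On $E_n^c$ the big-$\mathcal{M}$ constraints~\eqref{eq:miqp_bigMbeta}--\eqref{eq:miqp_bigMphi} keep $\hat{\bm{\theta}}_{L_0}$ deterministically bounded, while Proposition~\ref{th:oracle_bound} shows $P(E_n^c)$ decays exponentially in $n\Delta_\theta/\sigma^2$; a Cauchy--Schwarz bound on $E[\norm{\bm{A}(\hat{\bm{\theta}}_{L_0}-\bm{\theta}_0)}_2^2 \, I(E_n^c)]$ then shows the contribution from $E_n^c$ is asymptotically negligible compared with $\sigma^2(p_0+n_0)/n$, yielding the stated identity.

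\textbf{Part (2).} The upper bound in the minimax chain is immediate from Part~(1) since $p_0 + n_0 = \norm{\bm{\theta}_0}_0 \leq u_\theta$ throughout $B(u_\theta, l_\theta)$. For the matching lower bound I would adapt a standard sparse-regression information-theoretic argument: build a packing of $B(u_\theta,l_\theta)$ by varying the support of $\bm{\theta}$ over $u_\theta$-subsets of $\{1,\ldots,p+n\}$, with coordinate magnitudes calibrated so the band lower radius $l_\theta$ is met while neighbouring hypotheses remain KL-separated at scale $\sigma^2 u_\theta \max\{\log p, \log n\}/n$, and apply Fano's inequality with packing cardinality of order $\binom{p+n}{u_\theta}$. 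The main obstacle will be ensuring the packing respects the block decomposition $\bm{A}_{\mathcal{S}} = [\bm{X}_{\mathcal{S}_\beta}, \bm{I}_{\mathcal{S}_\phi}]$ simultaneously with the band lower radius: one must rule out that feature-block perturbations can be ``absorbed'' by the identity block, which ultimately rests on the minimal separation condition~\eqref{eq:regul_cond} on $\Delta_\theta$ already exploited in Proposition~\ref{th:oracle_consistency}. Once this is established, Gaussian-sequence calibration sharpens the constant to $\sigma^2$, matching the upper bound and delivering the claimed risk-minimax optimality.
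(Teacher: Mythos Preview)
Your approach is sound and, in substance, is what lies behind the paper's own proof---but the paper does not actually carry out any of these steps. Its entire argument is the single sentence ``Both results immediately follow from Theorem~2(B) in \citet{shen2013constrained} considering our SFSOD formulation based on the two disjoint sets $\mathcal{S}_\beta$ and $\mathcal{S}_\phi$.'' In other words, the authors treat the claim as a direct corollary of Shen et al.'s result for $L_0$-constrained feature selection, with the only novelty being that the augmented design $\bm{A}=[\bm{X},\bm{I}_n]$ splits into two blocks. What you have sketched---the oracle-event decomposition plus the projection identity for Part~(1), and the Fano packing lower bound over the $L_0$-band for Part~(2)---is precisely the machinery that underlies Theorem~2(B) of that reference, so your route and the paper's are the same at the level of ideas; you have simply unpacked the citation.

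Two minor remarks on your execution. First, the equality in Part~(1) should be read asymptotically (it inherits the $(n,p)\to\infty$ regime of Proposition~\ref{th:oracle_consistency}), so your treatment of the $E_n^c$ contribution as ``asymptotically negligible'' is the right reading; just be careful that $E_n$ and $\bm{\varepsilon}$ are not independent, so you need dominated convergence or a Cauchy--Schwarz split on \emph{both} $E_n$ and $E_n^c$ pieces, not only the latter. Second, your appeal to the big-$\mathcal{M}$ bounds to control $\hat{\bm{\theta}}_{L_0}$ on $E_n^c$ is a bit delicate because the theoretical section drops the ridge constraint and does not fix explicit $\mathcal{M}$ values; the cleaner route (and the one implicit in Shen et al.) is to bound the risk on $E_n^c$ by the worst-case OLS risk over all $(p_0+n_0)$-sized supports, which has at most polynomial growth and is killed by the exponential decay of $P(E_n^c)$ from Proposition~\ref{th:oracle_bound}.
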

        \noindent
        Finally, we note that the theoretical guarantees developed in this Section can be extended in a similar fashion to other penalization methods,  albeit under stronger assumptions.
        For instance, one might consider the regularized $L_0$-penalty or the trimmed $L_1$-penalty.
        Importantly, our results do hold also when $p_0$ depends on $p$ and/or $n_0$ depends on $n$ which has yet to be established for other methods in the literature \citep{shen2013constrained}.
        We stress the fact that all results for the proposed formulation rely on the identification of the true $k_p$ and $k_n$ tuning parameters. While this is a standard requirement to establish oracle properties (see \citealt{fan2001variable}), it highlights the importance of proper tuning for these bounds (see Section~\ref{sec2:proposal}).

    \section{Simulation study} \label{sec3:sim}
	    
	    We use simulations to investigate the performance of our proposal and compare it with state-of-the-art heuristic methods.
    	The simulated data is generated as follows. 
    	The first column of the $n \times p$ design matrix $ \bm{X}$ comprises all $1$'s (for the model intercept) and we draw the remaining entries of each row independently from a $(p-1)$-variate Gaussian $N(\bm{0}, \bm{\Sigma}_X)$, we fix the values of the $p$-dimensional coefficient vector $\bm{\beta}$ as to comprise $p_0$ non-zero entries (including the intercept),
    	and we draw each entry of the $n$-dimensional error vector $\bm{\varepsilon}$ independently from  a univariate Gaussian $ N(\bm{0},\sigma^2_{\text{SNR}})$. 
    	Here the positive scalar $\sigma^2_{\text{SNR}}$ is used to modulate the signal-to-noise-ratio 
    	$  \text{SNR}  = \text{var}(\bm{X} \bm{\beta} ) / \sigma^2_{\text{SNR}}$ characterizing each experiment.
        Next, without loss of generality, we contaminate the first $n_0$ cases with an MSOM, adding the scalar mean shifts $\lambda_{\bm{\varepsilon}}$ and $\lambda_{\bm{X}}$, respectively, to the errors and each of the $p_0 - 1$ active predictors.

        Specific simulation scenarios are defined through the values of the parameters listed above. Here, we present results for
        $\bm{\Sigma}_X = \bm{I}_{p-1}$ (uncorrelated features),
        $p_0=5$ active features with $\beta_j = 2$ (without loss of generality these correspond to $j=1, \ldots, 5$), $\text{SNR}=5$, fraction of contamination $n_0/n = 0.1$, mean shifts $ \lambda_{\bm{\varepsilon}} = -10 $ and $ \lambda_{\bm{X}} = 10 $, increasing sample sizes $n= 50, 100, 150$, and a ``low''- and a ``high''-dimensional setting with $p=50, 200$.
        Results for additional simulation scenarios are provided in the Supplementary Material. 
    	
           Replicating each scenario a certain number of (independent) times, say $q$, and creating (independent) test data, say $(\bm{y}^*, \bm{X}^*)$, from the same generating scheme but without contamination, we compare methods with a variety of criteria, namely:
           (i) out-of-sample prediction performance, measured by the \textit{root mean squared prediction error}
           $
                \text{RMSPE} = 
                \sqrt{n^{-1} \sum_{i=1}^{n}\left(y_{i}^{*}-\bm{x}_{i}^{*} \hat{\bm{\beta}}\right)^{2}}$; 
            (ii) estimation accuracy for $\bm{\beta}$, measured by the {\em average mean squared error} $\text{MSE}(\hat{\bm{\beta}}) =  p^{-1} \sum_{j=1}^{p} \text{MSE}(\hat{\beta}_j)$, where for each $ \hat{\beta}_j $ we form
            $
        	\text{MSE}(\hat{\beta}_j) =  q^{-1}\sum_{i=1}^{q} \\ (\hat{\beta}_{ji} - \beta_{j})^2 =
        	(\bar{\beta}_j - \beta_j)^2 + q^{-1}
        		 \sum_{i=1}^{q} (\hat{\beta}_{ji} - \bar{\beta}_j)^2$, decomposed in squared bias and variance
        	(here $ \bar{\beta}_j = q^{-1} \sum_{i=1}^{q} \hat{\beta}_{ji}$); 
        	(iii) feature selection accuracy, measured by the {\em false positive rate} 
            $
           \text{FPR}(\hat{\bm{\beta}})= \lvert \{j \in\{1, \ldots, p\}: \hat{\beta}_{j} \neq 0 \wedge \beta_{j}=0 \} \rvert /  \lvert \{j \in\{1, \ldots, p\}: \beta_{j}=0 \} \rvert 
            $
            and the {\em false negative rate}
            $
           \text{FNR}(\hat{\bm{\beta}})=  \lvert \{j \in\{1, \ldots, p\}: \hat{\beta}_{j}=0 \wedge \beta_{j} \neq 0 \} \rvert / \lvert \{j \in\{1, \ldots, p\}: \beta_{j} \neq 0 \} \rvert
            $;
            (iv) outlier detection  accuracy,  which is similarly measured by $\text{FPR}(\hat{\bm{\phi}})$ and $\text{FNR}(\hat{\bm{\phi}})$;
            (v) computational burden, measured as CPU time in seconds (this is used as a rough evaluation, since software implementations of different methods are not entirely comparable). 

	        Using the robust oracle estimator as a benchmark,
	        we compare the following estimators:
	        (a) sparse-LTS \citep{alfons2013sparse}, 
	        (b) enet-LTS \citep{kurnaz2017robust}, 
	        and (c) our proposal described in Section~\ref{sec2:proposal} (denoted MIP in result tables).
	        All methods trim the true number of outliers ($k_n=n_0$) and only their feature sparsity level $k_p$ is tuned.
            Additional implementation details can be found in the Supplementary Material.
	            
            Table~\ref{tab:1means} provides means and standard deviations (SD) of simulation results over $q=100$ replications.
            Our proposal substantially outperforms competing methods in most criteria. 
            In particular, for the low-dimensional setting ($p=50$), 
            its RMSPE converges faster to the oracle solution and the variance of its $\hat{\bm{\beta}}$ decreases faster as $n$ increases (the bias is essentially non-existent for all methods). 
            Notably, the FPR($\hat{\bm{\beta}}$) of sparse-LTS and enet-LTS increases with the sample size, while our approach avoids these type II errors. 
            Even with these sparser solutions, we retain comparable (and at times lower) FNR($\hat{\bm{\beta}}$).
            Our method struggles most when $n=50$, suggesting that additional work for tuning MIP may be beneficial in under-sampled problems. 
            All methods perform very well in terms of FPR($\hat{\bm{\phi}}$) and FNR($\hat{\bm{\phi}}$), though enet-LTS is slightly worse for $n=50$.
            As expected, the computational burden of our procedure is substantially higher than that of the competing heuristic methods -- though we note that averages here are not representative, as there is a marked right skew in the distribution of computing times across replications.
            For comparison we provide medians and MAD in Table~B.3 of the Supplementary Material and find that results are even stronger.
            For example, the average computing time with $n=150$ and $p=200$ is 946.07 minutes compared to a median of 599 minutes. 
            Our experience suggests that the growth in computational burden is mainly due to increases in the absolute number of outliers as the sample size increases.
            
            Similar conclusions hold under the high-dimensional scenario with $n < p=200$. Our proposal outperforms other methods in most criteria and converges faster to the oracle solution as $n$ increases.
            The increased problem  complexity causes right skews in the distributions across replications (see again Table~B.3 of the Supplementary Material for results in terms  of medians and MADs), and increases the computational burden of all methods.
            Also notably, FPR($\hat{\bm{\phi}}$) and FNR($\hat{\bm{\phi}}$) for outlier detection are higher compared to the low-dimensional setting. 
            
            In the Supplementary Material we report results for additional simulation scenarios, e.g.,~with weaker SNR and collinear features. Our method outperformed others in most of these settings as well (see Supplementary Material for details and further discussion).
            \begin{table}[!ht]
            \centering
            \caption{Mean (SD in parenthesis) of RMSPE, variance and squared bias for $\hat{\beta}$, FPR and FNR for feature selection and outlier detection, and computing time, 
            based on 100 simulation replications.} 
            \label{tab:1means}
            \begingroup\scriptsize\renewcommand\arraystretch{0.5} 
            \setlength{\tabcolsep}{4pt}
            \begin{tabular}{cclcccccccc}
            \midrule \midrule
            \multicolumn{1}{c}{$n$} & \multicolumn{1}{c}{$p$} & \multicolumn{1}{c}{Method} & \multicolumn{1}{c}{RMSPE} &
            \multicolumn{1}{c}{$\text{var}(\hat{\bm{\beta}})$} &
            \multicolumn{1}{c}{$\text{bias}(\hat{\bm{\beta}})^2$} & 
            \multicolumn{1}{c}{$\text{FPR}(\hat{\bm{\beta}})$} & 
            \multicolumn{1}{c}{$\text{FNR}(\hat{\bm{\beta}})$} & 
            \multicolumn{1}{c}{$\text{FPR}(\hat{\bm{\phi}})$} & 
            \multicolumn{1}{c}{$\text{FNR}(\hat{\bm{\phi}})$} & 
            \multicolumn{1}{c}{Time} \\
            \midrule
            50 & 50 & Oracle & 1.91(0.29) & 0.01(0.00) & 0.00(0.00) & 0.00(0.00) & 0.00(0.00) & 0.00(0.00) & 0.00(0.00) &   0.00(0.00) \\ 
            &  & EnetLTS & 2.56(0.97) & 0.05(0.00) & 0.03(0.00) & 0.20(0.23) & 0.03(0.12) & 0.01(0.02) & 0.04(0.17) &  14.72(0.68) \\ 
            &  & SparseLTS & 2.52(0.44) & 0.06(0.00) & 0.00(0.00) & 0.53(0.07) & 0.00(0.00) & 0.00(0.01) & 0.00(0.00) &   1.53(0.21) \\ 
            &  & MIP & 2.22(0.80) & 0.04(0.00) & 0.00(0.00) & 0.00(0.00) & 0.08(0.16) & 0.00(0.00) & 0.00(0.00) &  11.21(9.86) \\ 
            [2mm]
            100 & 50 & Oracle & 1.86(0.16) & 0.00(0.00) & 0.00(0.00) & 0.00(0.00) & 0.00(0.00) & 0.00(0.00) & 0.00(0.00) &   0.00(0.00) \\ 
            &  & EnetLTS & 2.01(0.21) & 0.01(0.00) & 0.00(0.00) & 0.27(0.22) & 0.00(0.00) & 0.00(0.01) & 0.00(0.00) &  12.26(0.22) \\ 
            &  & SparseLTS & 2.15(0.20) & 0.03(0.00) & 0.00(0.00) & 0.66(0.08) & 0.00(0.00) & 0.00(0.01) & 0.00(0.00) &   2.23(0.34) \\ 
            &  & MIP & 1.91(0.32) & 0.01(0.00) & 0.00(0.00) & 0.00(0.00) & 0.01(0.06) & 0.00(0.00) & 0.00(0.00) &  40.62(30.78) \\ 
            [2mm]
            150 & 50 & Oracle & 1.81(0.16) & 0.00(0.00) & 0.00(0.00) & 0.00(0.00) & 0.00(0.00) & 0.00(0.00) & 0.00(0.00) &   0.00(0.00) \\ 
            &  & EnetLTS & 1.92(0.19) & 0.01(0.00) & 0.00(0.00) & 0.39(0.27) & 0.00(0.00) & 0.00(0.00) & 0.00(0.00) &  12.67(0.22) \\ 
            &  & SparseLTS & 1.99(0.19) & 0.02(0.00) & 0.00(0.00) & 0.68(0.08) & 0.00(0.00) & 0.00(0.00) & 0.00(0.00) &   4.32(0.84) \\ 
            &  & MIP & 1.82(0.20) & 0.00(0.00) & 0.00(0.00) & 0.00(0.00) & 0.00(0.03) & 0.00(0.00) & 0.00(0.00) & 433.42(246.43) \\ 
            [2mm]
            50 & 200 & Oracle & 1.89(0.25) & 0.00(0.00) & 0.00(0.00) & 0.00(0.00) & 0.00(0.00) & 0.00(0.00) & 0.00(0.00) &   0.00(0.00) \\ 
            &  & EnetLTS & 3.28(1.19) & 0.03(0.00) & 0.02(0.00) & 0.17(0.10) & 0.19(0.32) & 0.02(0.03) & 0.16(0.30) &  44.50(2.84) \\ 
            &  & SparseLTS & 2.95(1.04) & 0.02(0.00) & 0.01(0.00) & 0.16(0.01) & 0.08(0.21) & 0.01(0.03) & 0.09(0.25) &   3.44(0.54) \\ 
            &  & MIP & 2.44(1.30) & 0.02(0.00) & 0.00(0.00) & 0.00(0.01) & 0.11(0.25) & 0.01(0.03) & 0.07(0.23) &  23.32(33.84) \\ 
            [2mm]
            100 & 200 & Oracle & 1.81(0.20) & 0.00(0.00) & 0.00(0.00) & 0.00(0.00) & 0.00(0.00) & 0.00(0.00) & 0.00(0.00) &   0.00(0.00) \\ 
            &  & EnetLTS & 2.74(1.20) & 0.02(0.00) & 0.01(0.00) & 0.25(0.16) & 0.09(0.21) & 0.02(0.03) & 0.14(0.31) &  55.36(5.37) \\ 
            &  & SparseLTS & 2.31(0.26) & 0.01(0.00) & 0.00(0.00) & 0.31(0.02) & 0.00(0.00) & 0.00(0.01) & 0.00(0.00) &  11.29(2.25) \\ 
            &  & MIP & 1.88(0.36) & 0.00(0.00) & 0.00(0.00) & 0.00(0.00) & 0.02(0.07) & 0.00(0.00) & 0.00(0.00) & 421.32(826.02) \\ 
            [2mm]
            150 & 200 & Oracle & 1.81(0.15) & 0.00(0.00) & 0.00(0.00) & 0.00(0.00) & 0.00(0.00) & 0.00(0.00) & 0.00(0.00) &   0.00(0.00) \\ 
            &  & EnetLTS & 2.59(1.29) & 0.02(0.00) & 0.01(0.00) & 0.25(0.14) & 0.07(0.16) & 0.02(0.03) & 0.12(0.28) &  75.70(180.69) \\ 
            &  & SparseLTS & 2.25(0.20) & 0.01(0.00) & 0.00(0.00) & 0.41(0.04) & 0.00(0.00) & 0.00(0.00) & 0.00(0.00) &  18.91(2.74) \\ 
            &  & MIP & 1.84(0.23) & 0.00(0.00) & 0.00(0.00) & 0.00(0.00) & 0.01(0.03) & 0.00(0.00) & 0.00(0.00) & 946.07(987.32) \\ 
            \bottomrule 
            \end{tabular}
            \endgroup
            \end{table}
            
	\section{Application}\label{sec4:appl}
	    
	    In this Section we investigate the relation between childhood obesity and the human microbiome using our approach.
	    We use data from \citet{craig2018child} -- who linked child weight gain to oral and gut microbiota compositions as part of the \textit{Intervention Nurses Start Infants Growing on Healthy Trajectories} (INSIGHT) study \citep{paul2014intervention}. We note that all data are publicly available; raw microbiota reads were deposited in SRA under BioProject number PRJNA420339 (see \url{www.ncbi.nlm.nih.gov/bioproject/PRJNA420339}) and phenotype information was deposited under dbGaP study number phs001498.v1.p1 (see \url{www.ncbi.nlm.nih.gov/projects/gap/cgi-bin/study.cgi?study\_id=phs001498.v1.p1}). 
	    While previous work \citep{goodson2009obesity, haffajee2009relation, zeigler2012microbiota} focused on the relationship between adult and/or adolescent obesity and microbiome compositions, \citet{craig2018child} connected infant weight gain (which is known to be predictive of obesity later in life, \citealt{taveras2009weight}) to oral and gut microbiota of the child, as well as oral microbiome of the mother. 
	    The goal of our analysis is to further study these relationships -- selecting relevant bacterial types (features) while accounting for potential outliers in the data. 
	    
	    Based on the pre-processing  in \citet{craig2018child}, we retained 215 child oral, 189 child gut, and 215 maternal oral samples. Correspondingly, we considered the abundances of 77, 74 and 78 bacterial ``types'', respectively -- which the original authors obtained aggregating phylogenetically very sparse and correlated abundance data (we further filtered based on those with a MAD of zero).   
	    We also focused on one among the phenotypes studied in \citet{craig2018child}; namely, \textit{Conditional Weight Gain Score} (CWG) -- a continuous measure computed from weight gain between birth and six months, which is commonly used in pediatric research \citep{griffiths2009effects,savage2016effect} (a positive CWG indicates an accelerated weight gain). 
	    
	    We thus applied our approach along with enet-LTS, sparse-LTS and classical Lasso to three main models; the regressions of CWG on bacterial types abundances in child oral, child gut, and maternal oral samples.
	    The problem sizes were thus  $215\times 78, 189 \times 75$, and $215 \times 79$ with the addition of an intercept term. For each regression we split the data at random into a training ($n^{\text{tr}} \approx 0.8 n$) and a test ($n^{\text{te}} \approx 0.2 n$) set, and followed the approach in \citet{mishra2019robust} to produce a (robustly) \textit{scaled prediction error} estimate (SPE; see Supplementary Material for more details). 
	    We fixed  the trimming proportion to 10\% for robust procedures, and we repeated the analysis on five different training and test splits for all methods.
	    Table~\ref{tab:2} provides, for each of the three regressions, means and SD of results over the five splits -- including SPE, number of features selected on the training set ($\hat{p}_0^{\text{tr}}$), and number of {\em non-outlying} cases identified in the test set ($n^{\text{te}}-\hat{n}_{0}^{\text{te}} $);
	    medians and MADs are in Table~C.1 of the Supplementary Material. 
	    The last column contains the total number of features selected on the full data ($ \hat{p}_0^{\text{full}} $).
	    \begin{table}[!ht]
            \centering
            \caption{
            Mean (SD in parenthesis) of SPE, number of features selected on the training set, and number of non-outlying points identified in the test set, based on five train-test splits. 
            Last column: number of features selected on the full data. 
            Robust methods use 10\% trimming.}
            \label{tab:2}
            \begingroup\scriptsize\renewcommand\arraystretch{0.5} 
            \setlength{\tabcolsep}{12.2pt}
              \begin{tabular}{lccclcccc}
             \midrule \midrule
             \multicolumn{1}{c}{Data} & \multicolumn{1}{c}{$n^{\text{tr}}$} & \multicolumn{1}{c}{$n^{\text{te}}$} & \multicolumn{1}{c}{$p$} &
             \multicolumn{1}{c}{Method} &
             \multicolumn{1}{c}{SPE} & 
             \multicolumn{1}{c}{$\hat{p}_0^{\text{tr}}$} & 
             \multicolumn{1}{c}{$n^{\text{te}}-\hat{n}_{0}^{\text{te}} $}  &
             \multicolumn{1}{c}{$\hat{p}_0^{\text{full}}$}\\
              \midrule
             Child oral & 172 & 43 & 78 & EnetLTS & 0.27(0.11) & 58.8(0.86) & 34.0(1.79) & 60\\ 
             & &  &  & SparseLTS & 0.34(0.08) & 57.4(6.2) & 33.8(1.50) & 55\\ 
             & & &  & MIP & 0.30(0.04) & 2.8(0.8) & 34.4(1.33) & 2\\ 
             & &  &  & Lasso & 0.52(0.05) & 3.6(2.14) & 34.4(1.60) & 10\\
               [2mm]
             Child gut & 152 & 38 & 75 & EnetLTS & 0.43(0.13) & 57.6(0.24) & 28.6(1.17) & 54\\ 
             & &  &  & SparseLTS & 0.33(0.02) & 57.4(4.69) & 30.0(1.30) & 74\\ 
             & & &        & MIP & 0.29(0.04) & 4.0(1.26) & 30.8(0.97) & 2\\ 
             & &  &  & Lasso & 0.60(0.13) & 3.2(1.05) & 31.4(1.54) & 1\\
               [2mm]
             Maternal oral &  172 & 43 & 79 & EnetLTS & 0.42(0.03) & 63.6(1.63) & 34.2(1.07) & 65\\ 
             & &  &  & SparseLTS & 0.42(0.14) & 56.6(5.32) & 34.2(1.02) & 60\\ 
             & & &  & MIP & 0.35(0.09) & 4.0(1.05) & 36.8(0.80) & 6\\ 
             & &  &  & Lasso & 0.75(0.09) & 1.0(0.00) & 37.2(0.97) & 1\\
               \bottomrule 
            \end{tabular}
            \endgroup
        \end{table}
        
        Our proposal outperforms competing methods for the child gut and maternal oral regressions (with SPEs lower by 14-52\% and 15-53\%, respectively). For the child oral regression, it has SPE 13\% higher than enet-LTS but still produces lower SPE than sparse-LTS and Lasso by 10\% and 42\% -- while selecting substantially sparser solutions (we also note that based on the median SPE in Table~C.1 our method is actually 22\% better than enet-LTS). Unsurprisingly, the non-robust Lasso results in the weakest solutions in terms of prediction error. 
        Outliers within the training data of each split may pull it towards very sparse but poorly predictive solutions. In comparison, our solutions are equally sparse (we select on average 1-3 features in addition to the intercept) but with much better prediction power. 
        
        When applied to the full datasets, enet-LTS and sparse-LTS select a very large number of bacterial type abundances, hindering interpretation. In contrast, our procedure produces very sparse solutions. In the child oral and gut microbiota, it selects only one bacterial type -- in both cases belonging to the Firmicutes phylum and having a positive effect on CWG (\citealt{craig2018child} identified different bacteria types, but always from the Firmicutes and with positive effects).
        In the mother oral microbiome, our method selects five  bacterial types -- one from the Bacteroidetes (positive effect), two from the Firmicutes (both positive), one from the Fusobacteria (negative), and one from the Proteobacteria (negative). These too do not coincide with the bacterial types selected in \citet{craig2018child} -- except for the Fusobacteria type.
        However, other Bacteriodetes and Proteobacteria types were selected by \citet{craig2018child} and the signs of the effects are all consistent with ours.
        
        The Lasso also produces sparse solutions, though for the child gut and mother oral microbiota it only selects the intercept. In the child oral microbiome, Lasso selects 9 bacterial types -- one from Actinobacteria (positive effect), two from Bacteriodetes (both negative), four from Firmicutes (three negative, one positive), and two from Proteobacteria (one positive, one negative) groups. Of those, two of the Firmicutes types were also selected in \citet{craig2018child}, and one other was selected by our approach. The effects of these types do not appear consistent with those found in \citet{craig2018child} -- even at the broad level of phyla.  For instance, Bacteriodetes types were found to have positive effects in previous work (and in our results), but the effect is negative in the Lasso solution. Similarly, Firmicutes types were found to have positive effects, but in the Lasso solution three of the four selected ones have negative effects. Overall, these results demonstrate not only that our proposal is very competitive in terms of predictive power (compared to other robust and non-robust methods), but also that it provides parsimonious, interpretable and informative solutions consistent with findings in existing literature.

	\section{Final  remarks}\label{sec5:final}
	    
	    Our proposal provides a general framework to simultaneously perform sparse estimation and outlier detection that can be used for Linear Models, as well as Generalized Linear Models and several classification and non-parametric methods. In our main results, we focus specifically on Linear Models (as do existing heuristic approaches) -- but we directly tackle the original problem and preserve its discrete nature; this facilitates implementation, interpretation, and generalizations.
	    Importantly, we provide optimal guarantees from both optimization and theoretical perspectives, and verify that these hold in numerical experiments.
	    
	    Specifically, our approach relies on $L_0$ constraints -- extending prior work where they were used separately for feature selection our outlier detection.
	    Our simultaneous MIP formulation can handle problems of considerable size, and produces solutions that improve upon existing heuristic methods.
	    Although our formulation provides provably optimal solutions from the optimization perspective, it is crucial to tune its integer constraints. 
	    Thus, we also provide computationally efficient, data-driven approaches to induce sparsity in the coefficients and the estimated amount of contamination.
	    
	    Theoretical properties characterizing our proposal include its high breakdown point, the \textit{robustly strong oracle property} -- which holds in ultra-high dimensional settings where the number of predictors grows exponentially with the sample size -- and optimality in parameter estimation with respect to the $L_2$-norm (i.e.~optimal prediction error and risk-minimaxity).
	    Notably, our proposal requires weaker assumptions than prior methods in the literature and, unlike such methods, it allows the sparsity level and/or the amount of contamination to grow with the number of predictors and/or the sample size.
	    
	    In addition to  performing numerical experiments, we presented a real-world application investigating the relationship between childhood obesity and the human microbiome.
	    Our proposal outperformed existing heuristic methods in terms of predictive power, robustness and solution sparsity, and produced results consistent with prior childhood obesity studies.

	    The work presented here can be expanded in several directions.
	    Even with modern solvers, larger problems and optimal tuning can make utilizing MIPs computationally challenging. We are pursuing ways to reduce the computational burden -- e.g.,~efficiently and effectively exploring the graph built by Branch \& Bound algorithms \citep{gatu2007graph}, extending  the \textit{perspective formulation} \citep{frangioni2006perspective} to the presence of MSOM outliers, and generating high-quality initial solutions for warm-starts and big-$\mathcal{M}$ bounds through continuous methods \citep{bertsimas2014least}.
	    To improve solution quality, we are further exploring the addition of a ridge-like term, which would naturally benefit from the extension of the perspective formulation, as well as robust versions of whitening methods for feature de-correlation \citep{kenney2018efficient}.
	    Finally, we are particularly interested in the class of Generalized Linear Models and Gaussian graphical models,
	    where the use of $L_0$ constraints for sparse estimation has already been investigated from a theoretical perspective \citep{shen2012likelihood}.
	    However, an effective implementation in modern MIP solvers is not trivial and the possible presence of adversarial contamination has not received much attention in the literature. 
	
    
    \def\spacingset#1{\renewcommand{\baselinestretch}%
    {#1}\small\normalsize} \spacingset{1}

	\bigskip
    \begin{center}
    {\Large\bf SUPPLEMENTARY MATERIAL}
    \end{center}

\spacingset{1}
    
\renewcommand{\theHsection}{A\arabic{section}}
\appendix

    \phantomsection
	\section*{Appendix A: Theoretical results}
	\renewcommand{\theequation}{A.\arabic{equation}}
	\setcounter{equation}{0}

        \begin{proof}[Proof of Proposition~\ref{lemma:1}.]
	        For the considered loss function $\rho(\cdot)$, with an additional ridge penalty and $L_0$ constraints on $ \bm{\beta}$ and $ \bm{\phi}$,
	        any feasible solution behaves similarly to the unpenalized OLS case --
	        whose proof is usually based on the Sherman--Morrison formula \citep{atkinson1985plots,chatterjee1988sensitivity}. 
	        Indeed, the additional $k_n$ degrees of freedom in \eqref{eq:miqp} are used to zero-out the largest transformed residuals as in \eqref{eq:quival_lts}. We can write \eqref{eq:miqp} as follows
	        \begin{align}
	            \min_{\substack{\norm{\bm{\beta}}_0\leq k_p \\\norm{\bm{\beta}}_2^2\leq\lambda
	            \\\norm{\bm{\phi}}_0\leq k_n}} \frac{1}{n} \sum_{i=1}^n \rho(y_i-\bm{x}_i^T\bm{\beta}-\phi_i) &= 
	            \min_{\substack{||\bm{\beta}||_0\leq k_p \\||\bm{\beta}||_2^2\leq\lambda}}\min_{||\bm{\phi}||_0 \leq k_n}\frac{1}{n}\sum_{i=1}^n \rho(y_i-\bm{x}_i^T\bm{\beta}-\phi_i) \nonumber \\
	            &= \min_{\substack{||\bm{\beta}||_0\leq k_p \\||\bm{\beta}||_2^2\leq\lambda}}\min_{||\bm{\phi}||_0 \leq k_n}\frac{1}{n}\sum_{i=1}^n \rho(e_i-\phi_i) \nonumber \\
	            &= \min_{\substack{||\bm{\beta}||_0\leq k_p \\||\bm{\beta}||_2^2\leq\lambda}} T(\bm{\beta}) \nonumber .
	        \end{align}
	       Then for fixed and feasible $\bm{\beta}$, we evaluate $T(\bm{\beta})$ before performing the outer minimization. 
	       Since $\rho(\bm{x}) \geq 0$ and $\rho(\bm{x})=0$ at $\bm{x}= \bm{0}$, each $\rho(e_i-\phi_i)$ in $T(\bm{\beta})$ is minimized at $\phi_i = e_i$. 
	       However, only at most
	       $k_n$ of the $\rho(e_i-\phi_i)$ with $\rho(e_i) \neq 0$ can achieve this minimum as $n-k_n$ or more of $\phi_i$ must be set to 0. 
	       Hence, it follows that 
	       $
	           T(\bm{\beta})
	           = n^{-1} [\sum_{i=1}^{n-k_n} (\rho(e_i-0))_{i:n} + \sum_{i=n-k_n+1}^{n} (\rho(e_i-e_i))_{i:n} ] 
	           = n^{-1}\sum_{i=1}^{n-k_n} (\rho(e_i))_{i:n}
           $.
	    \end{proof}

	   \begin{proof}[Proof of Proposition~\ref{th:bdp}.]
            
            Our approach is similar to \citet{bertsimas2014least} and \citet{alfons2013sparse} and is not affected by the presence of an $L_0$ constraint on $\bm{\beta}$.
            For the first part of the proof, consider the presence of $ n_0 \leq k_n = n-h $ outliers, so that $n-n_0 \geq h $ units are non-outlying. 
            Let
            $ \tilde{\bm{Z}} = [\tilde{\bm{y}}, \tilde{\bm{X}} ]$
            be the contaminated sample and 
            $M_y = \max_{1, \ldots, n}{ \lvert y_i \lvert } $.
            If $\hat{\bm{\beta}} = \bm{0}$ (possibly excluding the intercept term), the corresponding loss
            $ Q(\hat{\bm{\beta}}) $ in \eqref{eq:miqp} satisfies
            $
                Q(\bm{0}) = 
                \sum_{i=1}^{h} [ \rho (\tilde{\bm{y}} ) ]_{1:n} 
                \leq \sum_{i=1}^{h} [\rho (\bm{y}) ]_{1:n} 
                \leq h \rho (M_y) ,
            $ 
            where the first inequality follows from the fact that the contamination is arbitrary but not necessarily adversarial, as well as the result in Proposition~\ref{lemma:1}.
            Take any other estimate $\hat{\bm{\beta}}$ such that 
            $ \norm{\hat{\bm{\beta}}}_2^2 \geq l_1 $, where 
            $ l_1 = ( h \rho (M_y) + 1 )  / \lambda^* $ is independent from the contamination structure.
            It follows that
            $
               Q( \hat{\bm{\beta}} ) \geq
               \lambda^* \norm{ \bm{\beta} }_2^2 \geq
                h \rho (M_y) + 1  > Q( \bm{0} ) ,
            $
            leading to a contradiction since 
            $ Q( \bm{0} ) \geq Q( \hat{ \bm{\beta} } ) $
            (i.e.~the objective is non-decreasing in the number of active features).
            Thus, $ \norm{ \hat{ \bm{\beta}}(\tilde{\bm{Z}}) }_2^2 \leq l_1 $ shows that the MIP estimator in \eqref{eq:miqp} does not breakdown for $n_0 \leq k_n$ (i.e.~$ \epsilon^* \geq ( k_n + 1) / n $).
            
            For the second part of the proof, we need to show that 
            $ \epsilon^* \leq ( k_n + 1) / n $.
            We assume that the estimator does not breakdown,
            so that
            $ \norm{ \hat{ \bm{\beta} }(\tilde{\bm{Z}}) }_2^2 \leq u_1 $.
            Let $ n_0 > k_n $ and denote the corresponding contaminated sample as
            $ \tilde{\bm{Z}} = (\tilde{\bm{y}}, \tilde{\bm{X}}) = \bm{Z} + (\bm{\Delta}_{y}, \bm{\Delta}_{X} ) $.
            It follows that
            \begin{align}
                Q( \hat{\bm{\beta}}) =&
                \sum_{i=1}^{n-n_0} \left[ \rho (\tilde{\bm{y}} - \tilde{\bm{X}} \hat{\bm{\beta}} ) \right]_{i=1:n} 
                + \sum_{j=n-n_0+1}^{n-k_n} \left[ \rho (\tilde{\bm{y}} - \tilde{\bm{X}} \hat{\bm{\beta}} ) \right]_{j=1:n}  
                + \lambda^* \norm{\hat{\bm{\beta}}}_2^2  \nonumber \\
                \geq& 
                \left[ \rho \left( (y_i - \bm{x}_i^T \hat{\bm{\beta}}) + (\Delta_{y_i} - \bm{\Delta}_{x_i}^T \hat{\bm{\beta}} ) \right) \right]_{i=n-n_0+1}
                + \lambda^* \norm{\hat{\bm{\beta}}}_2^2  \label{eq:proof_bdp}
            \end{align}
            since at least one outlier is included in the fit;
            namely, the unit corresponding to the $(n-n_0+1)$-th position of the ordered transformed residuals. 
            Thus, the possible unboundedness of \eqref{eq:proof_bdp}, as both terms $ \Delta_{y_i} $ and $ \bm{\Delta}_{x_i} $ may take arbitrarily large values, contradicts the assumption that $ \norm{ \hat{ \bm{\beta} }(\tilde{\bm{Z}}) }_2^2 \leq u_1 $.
	   \end{proof}

 	    \begin{proof}[Proof of Proposition~\ref{th:necessCond}.]
 	        
	        The following proof provides a necessary condition for any method to achieve SFSOD consistency.
            This has been proved for feature selection with an $L_0$ constraint  \citep{shen2012likelihood,shen2013constrained}, and also extended to 
            the presence of group constraints \citep{xiang2015efficient}.
            Here we extend this further to account for the presence of, and identify, MSOM outliers.
            The main difference being that variable selection in \eqref{eq:miqp} is performed on two disjoint sets of coefficients $ \beta$ and $\phi$.
            Similarly to Theorem~1 in \citet{shen2013constrained}, we consider a least favorable scenario where Fano's inequality can be applied.
            Let $ \mathcal{C}_\beta = \{ \bm{\beta}_j \}_{j=0}^{p} $ be a collection of parameters 
            with components equal to $ \gamma_\beta $ or $0$ (e.g.,~one can think of $\bm{\beta}_0$ as being the true model).
            Similarly, 
            define $ \mathcal{C}_\phi = \{ \bm{\phi}_i \}_{i=0}^{n} $ as a collection of parameters 
            with components equal to $ \gamma_\phi $ or $0$.
            Assume also that 
            $ \norm{ \bm{\beta}_j - \bm{\beta}_{j'} }^2_2 \leq 4 \gamma^2_\beta $ 
            for any $j, j' \in {0, 1, \ldots, p}$,
            and $ \norm{ \bm{\phi}_i - \bm{\phi}_{i'} }^2_2 \leq 4 \gamma^2_\phi $ 
            for any $i, i' \in {0, 1, \ldots, n}$.
            Let
            $ 
            \Delta_{\beta}^* =
            \min_{\bm{\beta}_0: \lvert \bm{\beta}_0 \rvert \geq 1, \bm{\beta}_0 \in \mathcal{S}_\beta, \lvert \mathcal{S}_\beta \rvert \leq p_0 } \Delta_{\beta}
            $
            and
            $
             \Delta_{\phi}^* = 
            \min_{\bm{\phi}_0: \lvert \bm{\phi}_0 \rvert \geq 1, \bm{\phi}_0 \in \mathcal{S}_\phi, \lvert \mathcal{S}_\phi \rvert \leq n_0 } \Delta_{\phi}
            $
            such that
            $ 
              r_{\beta} =  \max_{1 \leq j \leq p}  \norm{\bm{X}_j}_2^2  [n \Delta_{\beta}^*]^{-1}
            $
            and 
            $ 
                r_{\phi} = [ n \Delta_{\phi}^* ]^{-1} 
            $.
            For any $\bm{\beta}_j , \bm{\beta}_{j'} \in \mathcal{C}_\beta $ with densities $q(\bm{\beta}_j)$ and $q(\bm{\beta}_{j'})$, respectively, the corresponding Kullback-Leibler information is equal to
            $ D[q(\bm{\beta}_j), q(\bm{\beta}_{j'})] 
            = \norm{ \bm{X} (\bm{\beta}_j - \bm{\beta}_{j'} ) }^2_2
            \leq   2 \max_{1\leq k \leq p} \norm{\bm{X}_k}^2_2  \gamma^2_\beta  / (n \sigma^2)  
            \leq   2 r_\beta \Delta_{\beta}  / \sigma^2 $.
            Here the first bound is obtained using sub-additivity and the triangle inequality, and the second one is based on Lemma~1 in \citet{shen2013constrained}.
            Similarly, 
            $ D[q(\bm{\phi}_i), q(\bm{\phi}_{i'})] 
            = \norm{ \bm{\phi}_i - \bm{\phi}_{i'}  }^2_2
            \leq   2  \gamma^2_\phi  / ( n \sigma^2 )
            \leq  2 r_\phi \Delta_{\phi} / \sigma^2 $.
            Thus, for any estimates $\bm{T}_\beta$ and $\bm{T}_\phi$, it follows from Fano's inequality that: 
            $ F_\beta = (p+1)^{-1} \sum_{j \in \mathcal{C}_\beta} P (\bm{T}_\beta=j) \leq  
                (2 n r_\beta \Delta_{\beta}  + \sigma^2 \log 2 ) /
                ( \sigma^2 \log (p) ) $ 
            and 
            $ F_\phi = (n+1)^{-1} \sum_{i \in \mathcal{C}_\phi} P (\bm{T}_\phi=i)  \leq  
                ( 2 n r_\phi \Delta_{\phi}  + \sigma^2 \log 2 ) /
                ( \sigma^2 \log (n) ) $.
            Using the fact that
            $ 
                P ( \hat{\mathcal{S}} \neq \mathcal{S} )  = 
                P [ ( \hat{\mathcal{S}}_\beta \neq \mathcal{S}_\beta ) \cup  
                   ( \hat{\mathcal{S}}_\phi \neq \mathcal{S}_\phi ) ]  =
                1 - P [ ( \hat{\mathcal{S}}_\beta = \mathcal{S}_\beta )  \cap  
                   ( \hat{\mathcal{S}}_\phi = \mathcal{S}_\phi ) ] \geq 
                1 - \min \{ P(\hat{\mathcal{S}}_\beta = \mathcal{S}_\beta) ,  
                 P( \hat{\mathcal{S}}_\phi = \mathcal{S}_\phi) \}
             $ leads to the following lower bound:
            \begin{align} \label{eq:fano}
                \sup_{ \{ (\theta, A) : \Delta_{\theta} \leq R^* \} }   
                P ( \hat{\mathcal{S}} \neq \mathcal{S} )  
                \geq  1 - \min(F_\beta, F_\phi) , 
            \end{align}
            where 
            $R^* = \max \{ (1-c^*_\beta) \sigma^2 \log(p) / ( 2 n r_\beta ),
            (1-c^*_\phi) \sigma^2 \log(n) / ( 2 n r_\phi ) \} $
            and $c^*_\beta , c^*_\phi  > 0$.
            For $ \sup_{ \theta_0 \in B_0(u,l)} (\mathcal{\hat{S}} \neq \mathcal{S} ) \to 0 $ as in Proposition~\ref{th:necessCond}, it follows from \eqref{eq:fano} that
            the $L_0$-band $ B(u,l) $ cannot interact with the $L_0$-ball $ B(R^*, 0) $.
            Thus, 
            a necessary condition for any estimator to achieve SFSOD consistency is that
            $ l \geq  \sigma^2 / n \max \left\{ \log(p) / (4 r_\beta) , 
                \log(n) / 4 r_\phi  \right\} 
            $; this provides a tighter bound compared to a na\"ive substitution of $p+n$ in place of $p$ in Theorem~1 of \citet{shen2013constrained}.
        \end{proof}

 	    \begin{proof}[Proof of Proposition~\ref{th:oracle_bound}.]
 	    
            The following result bounds the reconstruction error 
            $ P( \hat{\bm{\theta}}_{L_0} \neq \hat{\bm{\theta}}_0 ) \geq
            P( \hat{\mathcal{S}}^{L_0} \neq \mathcal{S} ) $
            and extends Theorem~2 in \citet{shen2013constrained} to the presence of MSOM outliers.
            Let $ \bar{\mathcal{S}} \subset \{1,  \dots, (p+n)\}$ be any feasible estimate of the active set
            such that $ \bar{\mathcal{S}}_\beta \neq \mathcal{S}_\beta $
            and $ \bar{\mathcal{S}}_\phi \neq \mathcal{S}_\phi $, with
            $ \lvert \bar{\mathcal{S}}_\beta \rvert \leq p_0 $ and 
            $ \lvert \bar{\mathcal{S}}_\phi \rvert \leq n_0 $.
            Note that if $k_p = p_0$ and $k_n = n_0$, it follows that
            $ \lvert \mathcal{\hat{S}}^{L_0} \rvert = 
            \lvert \mathcal{\hat{S}}_\beta^{L_0} \rvert 
            + \lvert \mathcal{\hat{S}}_\phi^{L_0} \rvert \leq p_0 + n_0 $.
            To simplify the notation, take 
            $ L(\bm{\theta}, \mathcal{S}_\beta, \mathcal{S}_\phi ) = \frac{1}{2} \norm{ \bm{y} - \bm{X}_{\mathcal{S}_\beta} \bm{\beta}_{\mathcal{S}_\beta} -
            \bm{I}_{n_{\mathcal{S}_\phi }}
            \bm{\phi}_{\mathcal{S}_\phi }}^2_2 $.
            Partitioning $ \bar{\mathcal{S}} $ as
            $ ( \bar{\mathcal{S}}_\beta \backslash \mathcal{S}_\beta ) \cup  
            ( \bar{\mathcal{S}}_\beta \cap \mathcal{S}_\beta ) \cup 
            ( \bar{\mathcal{S}}_\phi \backslash \mathcal{S}_\phi ) \cup  
            ( \bar{\mathcal{S}}_\phi \cap \mathcal{S}_\phi ) $,
            it follows that:
            \begin{align}
                P \left( \hat{\bm{\theta}}_{L_0} \neq \hat{\bm{\theta}}_0 \right)  \leq&
                P \left( \hat{\bm{\beta}}_{L_0} \neq \hat{\bm{\beta}}_0 \right) + 
                    P \left( \hat{\bm{\phi}}_{L_0} \neq \hat{\bm{\phi}}_0  \right) \nonumber \\
                    \leq& \sum_{\hat{\mathcal{S}}_\beta^{L_0} \in \bar{\mathcal{S}}_\beta} P \left( L(\bm{\beta}, \hat{\mathcal{S}}_\beta^{L_0} ) - L(\bm{\beta}_0, \mathcal{S}_\beta ) \leq 0 \right) \nonumber \\
                 & + 
                 \sum_{\hat{\mathcal{S}}_\phi^{L_0} \in \bar{\mathcal{S}}_\phi} P \left( L(\bm{\phi}, \hat{\mathcal{S}}_\phi^{L_0} ) - L(\bm{\phi}_0, \mathcal{S}_\phi) \leq 0 \right) 
                 \nonumber \\
                 \leq& \sum_{i_p = 0}^{p_0 -1} \sum_{j_p = 0}^{p_0 - i_p} 
                 \binom{p - p_0}{j_p} \binom{p_0}{i_p}
                P \left( L(\bm{\beta}, \hat{\mathcal{S}}_\beta^{L_0} = \bar{\mathcal{S}}_\beta(i_p, j_p) )  - L(\bm{\beta}_0, \mathcal{S}_\beta) \leq 0   \right) \nonumber \\
                & + 
                \sum_{i_n = 0}^{n_0 -1} \sum_{j_n = 0}^{n_0 - i_n}
                \binom{n - n_0}{j_n} \binom{n_0}{i_n}
                P \left( L(\bm{\phi}, \hat{\mathcal{S}}_\phi^{L_0} = \bar{\mathcal{S}}_\phi(i_n, j_n) )  - L(\bm{\phi}_0, \mathcal{S}_\phi)  \leq 0 \right)
               \nonumber ,
            \end{align}
            where the first inequality follows from the union bound, the second inequality uses the probability of each feasible solution,
            and the third upper bound is based on the total number of possible solutions for a given size of correct ($i_p$ and $i_n$) and incorrect ($j_p$ and $j_n$) selections for SFSOD.
            Following the argument in \citet{shen2013constrained} separately on the two terms
            $ P ( \hat{\bm{\beta}}_{L_0} \neq \hat{\bm{\beta}}_0 ) $ and  
            $ P ( \hat{\bm{\phi}}_{L_0} \neq \hat{\bm{\phi}}_0  ) $
            leads to the following result:
            \begin{align}
                P \left( \hat{\bm{\theta}}_{L_0} \neq \hat{\bm{\theta}}_0 \right) \leq& 2 
                \sum_{i_p = 1}^{p_0} \sum_{j_p = 0}^{i_p} 
                (p - p_0)^{j_p} p_0^{i_p}
                \exp \left( - \frac{i_p}{18 \sigma^2} n \Delta_{\beta} + \frac{2}{3} j_p \right) \nonumber \\
                & + 
                2 \sum_{i_n = 1}^{n_0} \sum_{j_n = 0}^{i_n}
                (n - n_0)^{j_n}  n_0^{i_n} 
                \exp \left( - \frac{i_n}{18 \sigma^2} n \Delta_{\phi} + \frac{2}{3} j_n \right) \nonumber \\
                \leq & 
                \frac{2 e}{ e- 1 } 
                \left\{
                R\left( \exp \left[ - \frac{n}{18 \sigma^2} \left( \Delta_{\beta} - 36 \frac{\log p}{ n} \sigma^2 \right)  
                \right] \right) \right. \nonumber \\
                & \left. + R\left( \exp \left[ - \frac{n}{18 \sigma^2} \left( \Delta_{\phi} - 36 \frac{\log n}{ n} \sigma^2 \right)  \right] \right)  \right\}  \nonumber \\
                \leq & 
                \frac{4 e}{ e- 1 } 
                \max \left\{
                R\left( \exp \left[ - \frac{n}{18 \sigma^2} \left( \Delta_{\beta} - 36 \frac{\log p}{ n} \sigma^2 \right)  
                \right] \right) , \right. \nonumber \\
                & \left. R\left( \exp \left[ - \frac{n}{18 \sigma^2} \left( \Delta_{\phi} - 36 \frac{\log n}{ n} \sigma^2 \right)  \right] \right)  \right\}  \nonumber ,
            \end{align}
            where $R(x) = x / (1-x)$.
            Using the fact that $ P ( \hat{\bm{\theta}}_{L_0} \neq \hat{\bm{\theta}}_0 ) \leq 1$ establishes the result in \eqref{eq:oracle_finite}; 
            this is a tighter bound compared to the na\"ive extension of Theorem~2 in \citet{shen2013constrained} using $p+n$ in place of $p$.
	    \end{proof}

 	    \begin{proof}[Proof of Proposition~\ref{th:oracle_consistency}.]
	        The result in Proposition~\ref{th:oracle_consistency}(1) immediately follows from Proposition~\ref{th:oracle_bound} through a pointwise bound of \eqref{eq:oracle_finite} to $\bm{\theta}_0 \in B(u_\theta,l_\theta) $.
	        For Proposition~\ref{th:oracle_consistency}(2) our approach is similar 
	        to \citet{liu2013asymptotic} and \citet{zhu2020high}. 
	        Proposition~\ref{th:oracle_consistency}(1) guarantees that 
	        $ P( \hat{\bm{\theta}}_{L_0} = \hat{\bm{\theta}}_0 ) \to 1 $ as $(n,p) \to \infty$.
	        Therefore, with a probability tending to one, it follows that:
	        \begin{align}
	          \hat{\bm{\theta}}_{L_0} = &( \bm{A}_{\hat{\mathcal{S}}^{L_0}}^T \bm{A}_{\hat{\mathcal{S}}^{L_0}} )^{-1} 
	                                  \bm{A}_{\hat{\mathcal{S}}^{L_0}}  \bm{y} \nonumber \\
                                  &( \bm{A}_{\hat{\mathcal{S}}^{L_0}}^T \bm{A}_{\hat{\mathcal{S}}^{L_0}} )^{-1} 
	                                  \bm{A}_{\hat{\mathcal{S}}^{L_0}} ( \bm{A}_{\mathcal{S}} \bm{\theta}_0 + \bm{\varepsilon} ) \nonumber \\
	                             & \bm{\theta}_0 + ( \bm{A}_{\mathcal{S}}^T \bm{A}_{\mathcal{S}} )^{-1} 
	                                            \bm{A}_{\mathcal{S}} \bm{\varepsilon} \nonumber .
	        \end{align}
	        Using the moment generating function with the fact that 
	        $ \varepsilon_i \sim N(0, \sigma^2) $ for $ i \notin \mathcal{S}_\phi$
	        leads to 
            $ ( \bm{A}_{\mathcal{S}}^T \bm{A}_{\mathcal{S}} )^{-1} \bm{A}_{\mathcal{S}} \bm{\varepsilon} \sim   
	        N(0, \sigma^2 ( \bm{A}_{\mathcal{S}}^T \bm{A}_{\mathcal{S}} )^{-1} ) $.
	        Consequently, 
	        $ \sqrt{n} ( \hat{\bm{\theta}}_{L_0} - \bm{\theta}_0 ) \to^d   N(0, \bm{\Sigma}_{\theta}^{-1} )  $.
	    \end{proof}

	    \begin{proof}[Proof of Proposition~\ref{th:proposition}.]
	         Both results immediately follow from Theorem~2(B) in \citet{shen2013constrained} considering our SFSOD formulation based on the two disjoint sets $\mathcal{S}_\beta$ and $\mathcal{S}_\phi$.
	    \end{proof}
    
    \phantomsection
	\section*{Appendix B: Simulation study} 
    \renewcommand{\thetable}{B.\arabic{table}}
	\setcounter{table}{0}

        \if1\blind
        {
            Our simulations were performed through the
            high-performance computing infrastructure of the
            \textit{Institute for Computational and Data Sciences Advanced CyberInfrastructure} (ICDS-ACI) at Penn State University. We used basic memory option on the ACI-B cluster with an Intel Xeon 24 core processor at 2.2 GHz and 128 GB of RAM. The multi-thread option in \texttt{R} and \texttt{Gurobi} was limited to a maximum of 24 threads.
        } \fi
        
        \if0\blind
        {
            Our simulations were performed through an high-performance computing infrastructure.
            In particular, we used an Intel Xeon 24 core processor at 2.2 GHz and 128 GB of RAM. The multi-thread option in \texttt{R} and \texttt{Gurobi} was limited to a maximum of 24 threads.
        } \fi
        
        We compared the following estimators:
        \begin{enumerate}

            \item[a.] The \textit{sparse-LTS} estimator combines an $L_1$-penalty with the LTS estimator \citep{alfons2013sparse}. 
            Similarly to other methods, we do not perform a final re-weighting step.
            The algorithm starts with 1000 initial subsamples, where 20 subsamples with the lowest value of the objective function are used to compute additional concentration-steps until convergence. 
            The sparsity level is tuned according to the BIC-type criterion proposed by the authors.
            Our implementation is based on the parallelized \texttt{sparseLTS} function of the \texttt{robustHD} package \citep{package_robHD} in \texttt{R} (we use the \texttt{R} version 3.6.1).

            \item[b.] The \textit{enet-LTS} estimator combines an elastic-net penalty with the LTS loss function \citep{kurnaz2017robust}. Also here we use a Lasso penalty and the algorithm starts with 1000 initial subsamples,
            where 20 subsamples with the lowest value of the objective function are used to compute additional concentration-steps until convergence. 
            The sparsity level is tuned through a robust 10-folds cross-validation as advocated by the authors.
            It is implemented using the \texttt{enetLTS} function (without parallelization) of the homonymous
            \texttt{R} package \citep{package_enetLTS}.

            \item[c.] Our \textit{MIP} procedure solves \eqref{eq:miqp} based on the $L_2$-loss and excludes the ridge-like penalty. 
            The sparsity level $k_p^*$ ranges from 1 (only the intercept term) to $2 p_0$ and is tuned through a BIC-type criterion (see also Section~\ref{sec2:proposalTech}). 
            This is computed as:
            $
            \text{BIC}(k_p^*) = k_p^* \log(h) + h \log( L )
            $,
            where $h = n-n_0$ and
            $ L = h^{-1} \norm{ \bm{y} - \bm{A} \hat{\bm{\theta}}_{L_0}}_2^2 $.
            Instead of taking the minimum BIC, we aim at finding an elbow across the considered $k_p^*$ values. 
            As a simple approach, our final solution is the one with the largest absolute decrease along consecutive model sizes, i.e.~$ k_p = \min_{k_p^*}\{ \text{BIC}(k_p^*) - \text{BIC}(k_p^* - 1) \} $. 
            For each $k_p^*$ value, the corresponding MIQP is warm-started using the result from the previous model of size $k_p^* - 1$.
            Big-$\mathcal{M}$ bounds are computed using the ensemble method described in Section~\ref{sec2:proposalTech}, including all estimators used in our comparison (apart from the oracle).
            Our implementation is based on the \texttt{Julia} programming language (version 0.6.0) in connection with the MIP commercial solver \texttt{Gurobi} (version 8.1.1) through the \texttt{JuMP} package. Our code can run in parallel and is provided in Appendix~D.
            Each job runs with a scheduled time limit of 300 seconds.
            
        \end{enumerate}
                
        Table~\ref{tab:1medians} shows our results in terms of medians and MADs for the simulation setting discussed in Section~\ref{sec3:sim}.
        A comparison with Table~\ref{tab:1means} highlights the skewness of most metrics, with all methods performing better, especially our proposal (denoted MIP).
        \begin{table}[!ht]
        \centering
        \caption{Median (MAD in parenthesis) of RMSPE, variance and squared bias for $\hat{\beta}$, FPR and FNR for $\hat{\beta}$ and outlier detection, and computing time, 
        based on 100 simulation replications for the simulation setting in Section~\ref{sec3:sim}.} 
        \label{tab:1medians}
        \begingroup\scriptsize\renewcommand\arraystretch{0.5} 
        \setlength{\tabcolsep}{4pt}
        \begin{tabular}{cclcccccccc}
        \midrule \midrule
        \multicolumn{1}{c}{$n$} & \multicolumn{1}{c}{$p$} & \multicolumn{1}{c}{Method} & \multicolumn{1}{c}{RMSPE} &
        \multicolumn{1}{c}{$\text{var}(\hat{\bm{\beta}})$} &
        \multicolumn{1}{c}{$\text{bias}(\hat{\bm{\beta}})^2$} & 
        \multicolumn{1}{c}{$\text{FPR}(\hat{\bm{\beta}})$} & 
        \multicolumn{1}{c}{$\text{FNR}(\hat{\bm{\beta}})$} & 
        \multicolumn{1}{c}{$\text{FPR}(\hat{\bm{\phi}})$} & 
        \multicolumn{1}{c}{$\text{FNR}(\hat{\bm{\phi}})$} & 
        \multicolumn{1}{c}{Time} \\
        \midrule
        50 & 50 & Oracle & 1.90(0.23) & 0.01(0.00) & 0.00(0.00) & 0.00(0.00) & 0.00(0.00) & 0.00(0.00) & 0.00(0.00) &   0.00(0.00) \\ 
        &  & EnetLTS & 2.35(0.42) & 0.05(0.00) & 0.03(0.00) & 0.11(0.07) & 0.00(0.00) & 0.00(0.00) & 0.00(0.00) &  14.74(0.73) \\ 
        &  & SparseLTS & 2.45(0.48) & 0.06(0.00) & 0.00(0.00) & 0.53(0.07) & 0.00(0.00) & 0.00(0.00) & 0.00(0.00) &   1.55(0.23) \\ 
        &  & MIP & 1.95(0.40) & 0.04(0.00) & 0.00(0.00) & 0.00(0.00) & 0.00(0.00) & 0.00(0.00) & 0.00(0.00) &   8.82(3.85) \\ 
        [2mm]
        100 & 50 & Oracle & 1.86(0.15) & 0.00(0.00) & 0.00(0.00) & 0.00(0.00) & 0.00(0.00) & 0.00(0.00) & 0.00(0.00) &   0.00(0.00) \\ 
        &  & EnetLTS & 2.00(0.18) & 0.01(0.00) & 0.00(0.00) & 0.22(0.20) & 0.00(0.00) & 0.00(0.00) & 0.00(0.00) &  12.21(0.20) \\ 
        &  & SparseLTS & 2.14(0.20) & 0.03(0.00) & 0.00(0.00) & 0.67(0.08) & 0.00(0.00) & 0.00(0.00) & 0.00(0.00) &   2.20(0.38) \\ 
        &  & MIP & 1.86(0.16) & 0.01(0.00) & 0.00(0.00) & 0.00(0.00) & 0.00(0.00) & 0.00(0.00) & 0.00(0.00) &  33.02(10.98) \\ 
        [2mm]
        150 & 50 & Oracle & 1.80(0.13) & 0.00(0.00) & 0.00(0.00) & 0.00(0.00) & 0.00(0.00) & 0.00(0.00) & 0.00(0.00) &   0.00(0.00) \\ 
        &  & EnetLTS & 1.88(0.14) & 0.01(0.00) & 0.00(0.00) & 0.30(0.21) & 0.00(0.00) & 0.00(0.00) & 0.00(0.00) &  12.65(0.24) \\ 
        &  & SparseLTS & 1.96(0.17) & 0.02(0.00) & 0.00(0.00) & 0.67(0.07) & 0.00(0.00) & 0.00(0.00) & 0.00(0.00) &   4.51(0.92) \\ 
        &  & MIP & 1.80(0.14) & 0.00(0.00) & 0.00(0.00) & 0.00(0.00) & 0.00(0.00) & 0.00(0.00) & 0.00(0.00) & 415.14(331.70) \\ 
        [2mm]
        50 & 200 & Oracle & 1.88(0.22) & 0.00(0.00) & 0.00(0.00) & 0.00(0.00) & 0.00(0.00) & 0.00(0.00) & 0.00(0.00) &   0.00(0.00) \\ 
        &  & EnetLTS & 2.86(0.74) & 0.03(0.00) & 0.02(0.00) & 0.17(0.04) & 0.00(0.00) & 0.00(0.00) & 0.00(0.00) &  43.99(2.48) \\ 
        &  & SparseLTS & 2.66(0.51) & 0.02(0.00) & 0.01(0.00) & 0.16(0.02) & 0.00(0.00) & 0.00(0.00) & 0.00(0.00) &   3.37(0.57) \\ 
        &  & MIP & 1.92(0.31) & 0.02(0.00) & 0.00(0.00) & 0.00(0.00) & 0.00(0.00) & 0.00(0.00) & 0.00(0.00) &   6.91(4.22) \\ 
        [2mm]
        100 & 200 & Oracle & 1.80(0.20) & 0.00(0.00) & 0.00(0.00) & 0.00(0.00) & 0.00(0.00) & 0.00(0.00) & 0.00(0.00) &   0.00(0.00) \\ 
        &  & EnetLTS & 2.28(0.45) & 0.02(0.00) & 0.01(0.00) & 0.24(0.09) & 0.00(0.00) & 0.00(0.00) & 0.00(0.00) &  53.96(3.34) \\ 
        &  & SparseLTS & 2.29(0.25) & 0.01(0.00) & 0.00(0.00) & 0.31(0.02) & 0.00(0.00) & 0.00(0.00) & 0.00(0.00) &  11.40(2.48) \\ 
        &  & MIP & 1.80(0.20) & 0.00(0.00) & 0.00(0.00) & 0.00(0.00) & 0.00(0.00) & 0.00(0.00) & 0.00(0.00) &  65.80(27.63) \\ 
        [2mm]
        150 & 200 & Oracle & 1.82(0.16) & 0.00(0.00) & 0.00(0.00) & 0.00(0.00) & 0.00(0.00) & 0.00(0.00) & 0.00(0.00) &   0.00(0.00) \\ 
        &  & EnetLTS & 2.07(0.24) & 0.02(0.00) & 0.01(0.00) & 0.21(0.14) & 0.00(0.00) & 0.01(0.01) & 0.00(0.00) &  56.98(3.41) \\ 
        &  & SparseLTS & 2.25(0.21) & 0.01(0.00) & 0.00(0.00) & 0.42(0.04) & 0.00(0.00) & 0.00(0.00) & 0.00(0.00) &  19.09(3.37) \\ 
        &  & MIP & 1.82(0.16) & 0.00(0.00) & 0.00(0.00) & 0.00(0.00) & 0.00(0.00) & 0.00(0.00) & 0.00(0.00) & 599.00(360.73) \\ 
        \bottomrule 
        \end{tabular}
        \endgroup
        \end{table}
 
        We also explored weak SNR scenarios.
        The following simulation setting is the same as in Section~\ref{sec3:sim}, with the only difference being that the signal-to-noise-ratio is reduced to $\text{SNR}=3$.
        Table~\ref{tab:1SNR3} shows simulation results in term of medians and MADs. A comparison with Table~\ref{tab:1means} shows that similar conclusions hold, although all methods experience an overall decrease in performance.
        Our approach generally outperforms other methods and converges faster to the oracle solution.
        However, for scenarios with small sample sizes, the FNR in $\hat{\bm{\beta}}$ for our method is worse. 
        Moreover, while computing time for heuristic methods remains similar to the stronger SNR scenario, our proposal shows a marked increase.
        \begin{table}[!ht]
        \centering
        \caption{Median (MAD in parenthesis) of RMSPE, variance and squared bias for $\hat{\beta}$, FPR and FNR for $\hat{\beta}$ and outlier detection, and computing time, 
        based on 100 simulation replications
        for a simulation setting similarly to Section~\ref{sec3:sim}
        with $\text{SNR}=3$.} 
        \label{tab:1SNR3}
        \begingroup\scriptsize\renewcommand\arraystretch{0.5} 
        \setlength{\tabcolsep}{4pt}
        \begin{tabular}{cclcccccccc}
         \midrule \midrule
         \multicolumn{1}{c}{$n$} & \multicolumn{1}{c}{$p$} & \multicolumn{1}{c}{Method} & \multicolumn{1}{c}{RMSPE} &
         \multicolumn{1}{c}{$\text{var}(\hat{\bm{\beta}})$} &
         \multicolumn{1}{c}{$\text{bias}(\hat{\bm{\beta}})^2$} & 
         \multicolumn{1}{c}{$\text{FPR}(\hat{\bm{\beta}})$} & 
         \multicolumn{1}{c}{$\text{FNR}(\hat{\bm{\beta}})$} & 
         \multicolumn{1}{c}{$\text{FPR}(\hat{\bm{\phi}})$} & 
         \multicolumn{1}{c}{$\text{FNR}(\hat{\bm{\phi}})$} & 
         \multicolumn{1}{c}{Time} \\
         \midrule
           50 & 50 & Oracle & 2.49(0.34) & 0.02(0.00) & 0.00(0.00) & 0.00(0.00) & 0.00(0.00) & 0.00(0.00) & 0.00(0.00) & 0.00(0.00) \\ 
           &  & EnetLTS & 3.09(0.73) & 0.14(0.00) & 0.06(0.00) & 0.18(0.13) & 0.00(0.00) & 0.00(0.00) & 0.00(0.00) & 14.76(0.81) \\
          &  & SparseLTS & 3.30(0.57) & 0.15(0.00) & 0.01(0.00) & 0.58(0.07) & 0.00(0.00) & 0.00(0.00) & 0.00(0.00) & 2.56(0.72) \\
           &  & MIP & 2.93(0.99) & 0.10(0.00) & 0.02(0.00) & 0.00(0.00) & 0.20(0.30) & 0.00(0.00) & 0.00(0.00) & 16.17(9.01) \\ 
           [2mm]
          100 & 50 & Oracle & 2.42(0.20) & 0.01(0.00) & 0.00(0.00) & 0.00(0.00) & 0.00(0.00) & 0.00(0.00) & 0.00(0.00) & 0.00(0.00) \\ 
           &  & EnetLTS & 2.58(0.26) & 0.02(0.00) & 0.00(0.00) & 0.30(0.16) & 0.00(0.00) & 0.00(0.00) & 0.00(0.00) & 12.30(0.21) \\ 
           &  & SparseLTS & 2.82(0.27) & 0.05(0.00) & 0.00(0.00) & 0.71(0.10) & 0.00(0.00) & 0.00(0.00) & 0.00(0.00) & 2.40(0.26) \\ 
           &  & MIP & 2.48(0.30) & 0.02(0.00) & 0.00(0.00) & 0.00(0.00) & 0.00(0.00) & 0.00(0.00) & 0.00(0.00) & 60.97(38.33) \\ 
           [2mm]
          150 & 50 & Oracle & 2.29(0.16) & 0.00(0.00) & 0.00(0.00) & 0.00(0.00) & 0.00(0.00) & 0.00(0.00) & 0.00(0.00) & 0.00(0.00) \\ 
           &  & EnetLTS & 2.48(0.23) & 0.02(0.00) & 0.00(0.00) & 0.50(0.25) & 0.00(0.00) & 0.00(0.00) & 0.00(0.00) & 12.82(0.18) \\ 
           &  & SparseLTS & 2.57(0.21) & 0.03(0.00) & 0.00(0.00) & 0.76(0.07) & 0.00(0.00) & 0.00(0.00) & 0.00(0.00) & 4.05(1.26) \\ 
           &  & MIP & 2.31(0.21) & 0.01(0.00) & 0.00(0.00) & 0.00(0.00) & 0.00(0.00) & 0.00(0.00) & 0.00(0.00) & 751.73(302.70) \\ 
           [2mm]
          50 & 200 & Oracle & 2.44(0.31) & 0.00(0.00) & 0.00(0.00) & 0.00(0.00) & 0.00(0.00) & 0.00(0.00) & 0.00(0.00) & 0.00(0.00) \\ 
           &  & EnetLTS & 3.93(1.28) & 0.04(0.00) & 0.03(0.00) & 0.18(0.05) & 0.00(0.00) & 0.01(0.02) & 0.00(0.00) & 45.94(2.43) \\ 
           &  & SparseLTS & 3.67(1.06) & 0.04(0.00) & 0.03(0.00) & 0.17(0.02) & 0.00(0.00) & 0.00(0.00) & 0.00(0.00) & 4.22(1.19) \\ 
           &  & MIP & 3.30(1.51) & 0.05(0.00) & 0.01(0.00) & 0.00(0.00) & 0.20(0.30) & 0.00(0.00) & 0.00(0.00) & 18.76(17.94) \\ 
           [2mm]
          100 & 200 & Oracle & 2.40(0.27) & 0.00(0.00) & 0.00(0.00) & 0.00(0.00) & 0.00(0.00) & 0.00(0.00) & 0.00(0.00) & 0.00(0.00) \\ 
           &  & EnetLTS & 3.18(0.65) & 0.03(0.00) & 0.01(0.00) & 0.30(0.10) & 0.00(0.00) & 0.01(0.02) & 0.00(0.00) & 55.32(3.40) \\ 
           &  & SparseLTS & 3.09(0.30) & 0.02(0.00) & 0.00(0.00) & 0.33(0.02) & 0.00(0.00) & 0.00(0.00) & 0.00(0.00) & 13.58(4.58) \\ 
           &  & MIP & 2.46(0.41) & 0.01(0.00) & 0.00(0.00) & 0.00(0.00) & 0.00(0.00) & 0.00(0.00) & 0.00(0.00) & 107.57(74.31) \\ 
           [2mm]
          150 & 200 & Oracle & 2.37(0.21) & 0.00(0.00) & 0.00(0.00) & 0.00(0.00) & 0.00(0.00) & 0.00(0.00) & 0.00(0.00) & 0.00(0.00) \\ 
           &  & EnetLTS & 2.76(0.31) & 0.02(0.00) & 0.01(0.00) & 0.28(0.14) & 0.00(0.00) & 0.00(0.00) & 0.00(0.00) & 56.74(3.69) \\ 
           &  & SparseLTS & 2.98(0.34) & 0.02(0.00) & 0.00(0.00) & 0.45(0.02) & 0.00(0.00) & 0.00(0.00) & 0.00(0.00) & 12.60(0.89) \\ 
           &  & MIP & 2.39(0.22) & 0.00(0.00) & 0.00(0.00) & 0.00(0.00) & 0.00(0.00) & 0.00(0.00) & 0.00(0.00) & 840.34(519.30) \\ 
           \bottomrule 
        \end{tabular}
        \endgroup
        \end{table}

        We also explored simulation settings with multicollinearity structures.
        Table~\ref{tab:1corr} presents our results for a simulation scenario which mimics that of Section~\ref{sec3:sim} (reporting medians and MADs), with the only difference being that $ \bm{\Sigma}_X $ has an autoregressive correlation structure $  \Sigma_{X, ij } = 0.3^{\lvert i - j \rvert} $. Though this could be considered a ``mild'' level of correlation, we note that the addition of contamination increases the amount of multicollinearity present.
        Here our approach is often outperformed by other methods for small sample sizes,
        however as the latter increase we can again notice that our proposal converges faster to the oracle solution and results like those in Table~\ref{tab:1means} hold.
        
        Smaller SNR regimes and stronger correlation structures were also explored; results are not reported as all methods performed quite poorly. In these settings, as advocated in \citet{hastie2017extended}, a ridge-like penalty may be beneficial.
        
        \begin{table}[!ht]
        \centering
        \caption{Median (MAD in parenthesis) of RMSPE, variance and squared bias for $\hat{\beta}$, FPR and FNR for $\hat{\beta}$ and outlier detection, and computing time, 
        based on 100 simulation replications 
        for a simulation setting similarly to Section~\ref{sec3:sim}
        in presence of multicollinearity.} 
        \label{tab:1corr}
        \begingroup\scriptsize\renewcommand\arraystretch{0.5} 
        \setlength{\tabcolsep}{4pt}
        \begin{tabular}{cclcccccccc}
         \midrule \midrule
         \multicolumn{1}{c}{$n$} & \multicolumn{1}{c}{$p$} & \multicolumn{1}{c}{Method} & \multicolumn{1}{c}{RMSPE} &
         \multicolumn{1}{c}{$\text{var}(\hat{\bm{\beta}})$} &
         \multicolumn{1}{c}{$\text{bias}(\hat{\bm{\beta}})^2$} & 
         \multicolumn{1}{c}{$\text{FPR}(\hat{\bm{\beta}})$} & 
         \multicolumn{1}{c}{$\text{FNR}(\hat{\bm{\beta}})$} & 
         \multicolumn{1}{c}{$\text{FPR}(\hat{\bm{\phi}})$} & 
         \multicolumn{1}{c}{$\text{FNR}(\hat{\bm{\phi}})$} & 
         \multicolumn{1}{c}{Time} \\
         \midrule
          50 & 50 & Oracle & 2.34(0.26) & 0.02(0.00) & 0.00(0.00) & 0.00(0.00) & 0.00(0.00) & 0.00(0.00) & 0.00(0.00) & 0.00(0.00) \\ 
           &  & EnetLTS & 2.75(0.36) & 0.05(0.00) & 0.02(0.00) & 0.07(0.07) & 0.00(0.00) & 0.00(0.00) & 0.00(0.00) & 14.71(0.55) \\ 
         &  & SparseLTS & 2.92(0.49) & 0.08(0.00) & 0.00(0.00) & 0.51(0.07) & 0.00(0.00) & 0.00(0.00) & 0.00(0.00) & 1.39(0.22) \\ 
           &  & MIP & 3.00(0.59) & 0.09(0.00) & 0.00(0.00) & 0.11(0.03) & 0.00(0.00) & 0.00(0.00) & 0.00(0.00) & 9.05(2.67) \\ 
           [2mm]
          100 & 50 & Oracle & 2.30(0.21) & 0.01(0.00) & 0.00(0.00) & 0.00(0.00) & 0.00(0.00) & 0.00(0.00) & 0.00(0.00) & 0.00(0.00) \\ 
           &  & EnetLTS & 2.43(0.23) & 0.01(0.00) & 0.01(0.00) & 0.09(0.10) & 0.00(0.00) & 0.00(0.00) & 0.00(0.00) & 12.63(0.18) \\ 
           &  & SparseLTS & 2.57(0.25) & 0.03(0.00) & 0.00(0.00) & 0.56(0.10) & 0.00(0.00) & 0.00(0.00) & 0.00(0.00) & 1.77(0.20) \\ 
           &  & MIP & 2.46(0.26) & 0.02(0.00) & 0.00(0.00) & 0.04(0.03) & 0.00(0.00) & 0.00(0.00) & 0.00(0.00) & 26.37(8.51) \\ 
           [2mm]
          150 & 50 & Oracle & 2.21(0.17) & 0.00(0.00) & 0.00(0.00) & 0.00(0.00) & 0.00(0.00) & 0.00(0.00) & 0.00(0.00) & 0.00(0.00) \\ 
           &  & EnetLTS & 2.36(0.17) & 0.01(0.00) & 0.00(0.00) & 0.18(0.20) & 0.00(0.00) & 0.00(0.00) & 0.00(0.00) & 13.13(0.16) \\ 
           &  & SparseLTS & 2.38(0.19) & 0.02(0.00) & 0.00(0.00) & 0.54(0.05) & 0.00(0.00) & 0.00(0.00) & 0.00(0.00) & 3.23(0.82) \\ 
           &  & MIP & 2.31(0.20) & 0.01(0.00) & 0.00(0.00) & 0.02(0.03) & 0.00(0.00) & 0.00(0.00) & 0.00(0.00) & 340.60(175.80) \\ 
           [2mm]
          50 & 200 & Oracle & 2.33(0.33) & 0.00(0.00) & 0.00(0.00) & 0.00(0.00) & 0.00(0.00) & 0.00(0.00) & 0.00(0.00) & 0.00(0.00) \\ 
           &  & EnetLTS & 3.19(0.96) & 0.04(0.00) & 0.01(0.00) & 0.16(0.05) & 0.00(0.00) & 0.00(0.00) & 0.00(0.00) & 44.68(2.70) \\ 
           &  & SparseLTS & 3.01(0.55) & 0.02(0.00) & 0.00(0.00) & 0.16(0.02) & 0.00(0.00) & 0.00(0.00) & 0.00(0.00) & 3.48(0.82) \\ 
           &  & MIP & 3.01(0.74) & 0.03(0.00) & 0.00(0.00) & 0.02(0.01) & 0.00(0.00) & 0.00(0.00) & 0.00(0.00) & 8.30(5.25) \\ 
           [2mm]
          100 & 200 & Oracle & 2.27(0.23) & 0.00(0.00) & 0.00(0.00) & 0.00(0.00) & 0.00(0.00) & 0.00(0.00) & 0.00(0.00) & 0.00(0.00) \\ 
           &  & EnetLTS & 2.63(0.32) & 0.02(0.00) & 0.00(0.00) & 0.18(0.12) & 0.00(0.00) & 0.00(0.00) & 0.00(0.00) & 54.46(2.21) \\ 
           &  & SparseLTS & 2.76(0.32) & 0.01(0.00) & 0.00(0.00) & 0.29(0.02) & 0.00(0.00) & 0.00(0.00) & 0.00(0.00) & 8.74(1.75) \\ 
           &  & MIP & 2.75(0.29) & 0.01(0.00) & 0.00(0.00) & 0.03(0.00) & 0.00(0.00) & 0.00(0.00) & 0.00(0.00) & 66.16(20.62) \\ 
           [2mm]
          150 & 200 & Oracle & 2.24(0.20) & 0.00(0.00) & 0.00(0.00) & 0.00(0.00) & 0.00(0.00) & 0.00(0.00) & 0.00(0.00) & 0.00(0.00) \\ 
           &  & EnetLTS & 2.46(0.24) & 0.01(0.00) & 0.00(0.00) & 0.16(0.10) & 0.00(0.00) & 0.00(0.00) & 0.00(0.00) & 59.83(2.19) \\ 
           &  & SparseLTS & 2.67(0.23) & 0.01(0.00) & 0.00(0.00) & 0.35(0.05) & 0.00(0.00) & 0.00(0.01) & 0.00(0.00) & 14.67(2.96) \\ 
           &  & MIP & 2.55(0.21) & 0.01(0.00) & 0.00(0.00) & 0.03(0.00) & 0.00(0.00) & 0.00(0.00) & 0.00(0.00) & 375.21(247.13) \\ 
           \bottomrule 
        \end{tabular}
        \endgroup
        \end{table}

        \phantomsection
    	\section*{Appendix C: Application study}
    	\renewcommand{\theequation}{C.\arabic{equation}}
    	\setcounter{equation}{0}
    	\renewcommand{\thetable}{C.\arabic{table}}
    	\setcounter{table}{0}
    
        	 Our analysis of the real datasets was performed through the same high-performance computing infrastructure as our simulations. We used an Intel Xeon 24 core processor at 2.2 GHz and 128 GB of RAM. The multi-thread option in \texttt{R} and \texttt{Gurobi} was limited to a maximum of 24 threads.
        	 
        	 Since the test set may contain any number of outliers, we utilized a (robustly) scaled estimation procedure as proposed in \citet{mishra2019robust} for our comparisons in Table~\ref{tab:2}. After tuning $k_p$, we compute the training and test residuals ($\bm{\epsilon}_{\text{tr}}$ and $\bm{\epsilon}_{\text{te}}$), training and test scaling parameter ($ s_{\text{tr}}$ and $ s_{\text{te}}$), and scaled test error ($\bm{r}_{\text{te}}$) as follows:
        	 \begin{align}
        	     \bm{\epsilon}_{\text{tr}} &= \bm{y}_{\text{tr}}-\bm{X}_{\text{tr}}\hat{\bm{\beta}}-\hat{\bm{\phi}} \nonumber \\
        	     s_{\text{tr}} &= \sqrt{||\bm{\epsilon}_{\text{tr}}||^2_2 /n_{\text{tr}}} \nonumber \\
        	     \bm{\epsilon}_{\text{te}} &= \bm{y}_{\text{te}}-\bm{X}_{\text{te}}\hat{\bm{\beta}} \nonumber \\
        	     s_{\text{te}} &= \text{MAD}(\bm{r}_{\text{te}}) \nonumber \\
        	     \bm{r}_{\text{te}} &=\bm{\epsilon}_{\text{te}}/\bm{s}_{\text{tr}} \nonumber .
        	 \end{align}
        	 Outliers from $\bm{r}_{\text{te}}$ are identified and removed based on the threshold rule $|\bm{r}_{\text{te}}|>1.345s_{\text{te}}$ (as specified in \citealt{mishra2019robust}) and the average SPE is taken over the 
        	 estimated non-outlying points.
        	 
        	   Lastly, we provide the median results across replications in Table~\ref{tab:2medians} corresponding to the same instances provided in Table~2 of the main text. We compare SPE values across robust approaches, enet-LTS, sparse-LTS and our MIP proposal (described in Appendix~B) as well as a classic Lasso. More specifically, we use a Lasso penalty and tune via 10-fold cross-validation across a grid of at most 100 tuning parameters. It is implemented using the \texttt{cv.glmnet} function within the \texttt{glmnet} \texttt{R} package \citep{friedman2010}.
        	  \begin{table}[!ht]
                \centering
                \caption{Median (MAD in parenthesis) of SPE, estimated nonzero coefficients on training data, and estimated number of non-outlying points on test data, based on 5 replications,
                and estimated nonzero coefficients on the full data, 
                for the human microbiome analysis using 10\% trimming for robust methods.}
                \label{tab:2medians}
                \begingroup\scriptsize\renewcommand\arraystretch{0.5} 
                \setlength{\tabcolsep}{12.2pt}
                \begin{tabular}{lccclcccc}
                 \midrule \midrule
                 \multicolumn{1}{c}{Data} & \multicolumn{1}{c}{$n^{\text{tr}}$} & \multicolumn{1}{c}{$n^{\text{te}}$} & \multicolumn{1}{c}{$p$} &
                 \multicolumn{1}{c}{Method} &
                 \multicolumn{1}{c}{SPE} & 
                 \multicolumn{1}{c}{$\hat{p}_0^{\text{tr}}$} & 
                 \multicolumn{1}{c}{$n^{\text{te}}-\hat{n}_{0}^{\text{te}} $}  &
                 \multicolumn{1}{c}{$\hat{p}_0^{\text{full}}$}\\
                 \midrule
                 Child oral & 172 & 43 & 78 & EnetLTS & 0.41(0.11) & 58(0.86) & 35(1.79) & 60\\ 
                 & &  &  & SparseLTS & 0.43(0.08) & 56(6.2) & 33(1.50) & 55\\ 
                 & & &  & MIP & 0.32(0.04) & 2(0.8) & 34(1.33) & 2\\ 
                 & &  &  & Lasso & 0.51(0.05) & 1(2.14) & 33(1.60) & 10\\
                   [2mm]
                 Child gut & 152 & 38 & 75 & EnetLTS & 0.33(0.13) & 58(0.24) & 29(1.17) & 54\\ 
                 & &  &  & SparseLTS & 0.34(0.02) & 54(4.69) & 30(1.30) & 74\\ 
                 & & &        & MIP & 0.31(0.04) & 2(1.26) & 30(0.97) & 2\\ 
                 & &  &  & Lasso & 0.68(0.13) & 1(1.05) & 33(1.54) & 1\\
                   [2mm]
                 Maternal oral &  172 & 43 & 79 & EnetLTS & 0.45(0.03) & 64(1.63) & 34(1.07) & 65\\ 
                 & &  &  & SparseLTS & 0.36(0.14) & 58(5.32) & 34(1.02) & 60\\ 
                 & & &  & MIP & 0.38(0.09) & 3(1.05) & 37(0.80) & 6\\ 
                 & &  &  & Lasso & 0.78(0.09) & 1(0.00) & 36(0.97) & 1\\
                   \bottomrule 
                \end{tabular}
                \endgroup
            \end{table}
 
        \phantomsection
    	\section*{Appendix D: Code}
        	
        	Our code is available upon request.

\begingroup
    {\small
    \bibliography{biblio.bib}
    }
\endgroup

\end{document}